\documentclass[aps,pre,groupedaddress,amssymb,amsmath,pra,floatfix,showpacs]{revtex4}
\usepackage{amssymb}
\usepackage{latexsym}
\usepackage{amsmath}
\usepackage{graphicx}

      \def\dC{{\mathbb C}}

      \def\dR{{\mathbb R}}


\def\bm\chi{\mbox{\boldmath$\chi$}}

\def\diag{{\rm diag\,}}

\let\xker=\ker \def\ker{{\xker\,}}

\def\supp{{\rm supp\,}}

\newtheorem{theorem}{Theorem}[section]
\newtheorem{proposition}[theorem]{Proposition}
\newtheorem{corollary}[theorem]{Corollary}

\newtheorem{example}[theorem]{Example}
\newtheorem{remark}[theorem]{Remark}

\newenvironment{proof}%
{\rm \trivlist \item[\hskip \labelsep{\bf Proof. }]}%
{\hspace*{\fill}$\Box$\endtrivlist}
{\rm \trivlist \item[\hskip \labelsep{\bf Proof}]}%
{\hspace*{\fill}$\Box$\endtrivlist}

\usepackage{color}

 \providecommand{\ii}{\mathrm i}

 \DeclareMathOperator{\const}{const}

\providecommand{\Real}{\mathbb{R}} \providecommand{\Comp}{\mathbb{C}}
\providecommand{\Nat}{\mathbb{N}} 
 \providecommand{\Zp}{\mathbb{Z}_+}

\providecommand{\eps}{\varepsilon} \providecommand{\dts}{,\dots,}
 \providecommand{\sbs}{\subseteq}
\providecommand{\To}{\Rightarrow}





\providecommand{\set}[1]{\left\{#1\right\}} \providecommand{\seq}[1]{\left<#1\right>}
\providecommand{\norm}[1]{\left\Vert#1\right\Vert} \numberwithin{equation}{section}
\providecommand{\Real}{\mathbb{R}} \providecommand{\Comp}{\mathbb{C}}
\providecommand{\Nat}{\mathbb{N}}

\begin{document}
\title{Truncations of a class of pseudo-Hermitian operators}
\author{Maxim Derevyagin${}^{*}$, Luca Perotti${}^{\dag}$, Micha\l{} Wojtylak${}^{\$}$}
\affiliation{
${}^{*}$ University of Mississippi, 
Department of Mathematics, 
Hume Hall 305, 
P. O. Box 1848, 
University, MS 38677-1848, USA. derevyagin.m@gmail.com}
\affiliation{${}^{\dag}$ Department of Physics, Texas Southern University, Houston, Texas 77004 USA. perottil@tsu.edu}
\affiliation{${}^{\$}$ Jagiellonian University, Faculty of Mathematics and Computer Science, \L ojasiewicza 6, 30-348 Krak\'ow. michal.wojtylak@gmail.com}
\date{\today}

\begin{abstract}
We consider the class of non-Hermitian operators represented by infinite tridiagonal matrices, selfadjoint in an indefinite inner product space with one negative square. We approximate them with their finite truncations. Both infinite and truncated matrices have  eigenvalues of nonpositive type: either a single one on the real axis or a couple of complex conjugate ones. As a tool to evaluate the reliability of the use of truncations in numerical simulations, we give bounds for the rate of convergence of their eigenvalues of nonpositive type. Numerical examples illustrate our results.


MSC 2010 numbers: 47B36, 47B50
\end{abstract}

\pacs{02.30.Tb, 11.30.Er, 03.65.-w}

\maketitle

\section{Introduction}

The Hamiltonian $H$ of a physical system represents its energy, which is a real observable. It is therefore required that the expectation values of the quantum operator $H$ 
be real \cite{exep}. This can be guaranteed by imposing that $H$ 
be Hermitian, $H=H^\dagger$, as it is known that the spectrum of a Hermitian operator is real and its eigenvectors form a complete orthogonal set  \cite{dir}.

It is on the other hand known that Hermiticity is not a necessary condition for a real spectrum \cite{bend}: a large number of one-dimensional 
non-Hermitian potentials, both real and complex,
invariant under the simultaneous actions of the parity $P$ (space 
reflection) and time reflection $T$ operators \cite{note} have been found to admit energies that are real and discrete.

The matter is not a idle one, as non-Hermitian PT-invariant operators find applications in many areas of theoretical 
physics: ``optical" or ``average" potentials in nuclear physics \cite{boh}, quantum field theories \cite{Itz}, 
scattering problems \cite{Fesh}, localization-delocalization transitions in superconductors \cite{Fein}, defraction of 
atoms by standing light waves \cite{Berry}, as well as the study of solitons on a complex Toda lattice \cite{toda}.

Unfortunately, PT-invariance is neither necessary nor sufficient to ensure the reality of the spectrum; however, it has been conjectured \cite{bend} 
that PT invariant Hamiltonians possess real discrete eigenvalues if the PT 
symmetry is unbroken i.e. if the energy eigenstates are also eigenstates of the operator PT. When the PT-symmetry is 
broken and the Hamiltonian is real there instead are energy eigenvalues that are complex conjugate pairs. However, no 
general condition has been found for the breakdown of the PT-symmetry.

In this contest, it has been pointed out that a necessary, but not sufficient, condition for the spectrum to be real and discrete is the $\eta$-pseudo-Hermiticity, 
  $\eta H \eta^{-1} = H^\dagger$, of the Hamiltonian, where  $\eta$ is a Hermitian linear automorphism \cite{Nevanlinna}.  The property is also known as  selfadjointness in an indefinite inner product space, see \cite{bognar,langerio,GLR}. The eigenvectors of $H$ are in this case $\eta$-orthogonal, i.e. they are orthogonal 
according to the $\eta$-distorted inner-product $<\psi|\eta \psi>$.    

Several PT-symmetric potentials have been found to be P-pseudo-Hermitian \cite{Most} and classes of non-Hermitian Hamiltonians -both PT-symmetric and non-PT-symmetric- appear to be pseudo-Hermitian under $\eta = e^{-\theta p}$ where $\theta \in \Real$ and $p = −i \frac{d}{dx}$ , ($\hbar = 1$) is the momentum operator (the transformation generated by $\eta$ is an imaginary shift: $\eta x \eta^{-1} = x+i\theta$, $\eta p \eta^{-1} = p$) \cite{Ahmed}, or $\eta = e^{-\varphi(x)}$ where $\varphi(x)$ is a $C^1$ function of $x$ (the transformation is a complex gauge-like one) \cite{Ahmed2}.

It is thus still a case by case procedure to check whether the eigenvalues of an operator are all real. This does not usually cause big practical problems when dealing with a single operator. The situation changes when we have to consider classes of operators. Procedures have been developed for families of operators acting on spaces with finite bases, see e.g. Ref. \cite{Weig} whose author considers a one parameter family of PT-symmetric matrices $M(\varepsilon)$, with a perturbation parameter $\varepsilon \in \Real$ which destroys Hermiticity while it respects PT-invariance.

Here we consider the case when for numerical simulations an operator $H$ acting on a space with an infinite basis needs to be truncated and study the rate of convergence to their asymptotic value of those eigenvalues that for truncated matrices may happen to be non-real. The  operator $H=H_{[0,\infty)}$ is given by a non-symmetric Jacobi matrix 
\begin{equation}\label{H0infty}
H_{[0,\infty)}=
\begin{pmatrix}
  a_0 & -b_0 &  &  \\
  b_0 & a_1 & b_1 &  \\
      & b_1 & a_2 & b_2 \\
      &      &   b_2  & a_3 & \ddots\\
      &    &  & \ddots & \ddots \\
\end{pmatrix}, 
\end{equation}
with bounded, real sequences $(a_j)_{j=0}^\infty$, $(b_j)_{j=0}^\infty$, the sequence $(b_j)_{j=0}^\infty$ being additionally strictly positive. Its finite truncations are of the form 
\begin{equation}\label{H0n}
H_{[0,n]}=
\begin{pmatrix}
  a_0 & -b_0 &  &  \\
  b_0 & a_1 & b_1 &  \\
           &   b_1  & a_2 & \ddots\\
          &  & \ddots & \ddots & b_{n-1}\\
  & & & b_{n-1} & a_n \\
\end{pmatrix},
\end{equation}
 and
\[
\eta=\diag(-1,1,1,\dots).
\]
Due to the fundamental theorem of Pontryagin \cite{pontriagin} each the operators $H_{[0,n]}$ has, generically, either a unique single eigenvalue $\lambda_n$ on the real axis with the eigenvector $f_n$ satisfying $\seq{f_n|\eta f_n}\leq 0$ or a single couple of complex conjugate eigenvalues $\lambda_n\in\Comp^+$, $\bar\lambda_n\in\Comp^-$ 
(to avoid confusion with the conventions used in some of the papers we quote, we note that here and in the following $\seq{x|y}$ always denotes the usual inner product --either in $\Comp^n$ or in $\ell^2$-- {\it linear with respect to the second variable}). 
The remaining part of the spectrum of $H_{[0,n]}$ is real. The same is true for the spectrum of the infinite matrix $H_{[0,\infty]}$ with the eigenvalue $\lambda_\infty$, see Section \ref{Prel} for details. The character of the convergence $\lambda_n\to\lambda_\infty$ is the main topic of our paper.

Our approach makes use of analytic representations of the function
\begin{equation}\label{mform}
m_{[0,\infty)}(z)=-\seq{e_0|(H_{[0,\infty)}-z)^{-1}e_0},
\end{equation}
which contains the full information about the spectrum of $H_{[0,\infty)}$ and of its $[n-1/n]$ Pad\'e approximants
\begin{equation}\label{mnform}
m_{[0,n]}(z)=-\seq{e_0|(H_{[0,n]}-z)^{-1}e_0}.
\end{equation}
In particular, $\lambda_n$ ($\lambda_\infty$) is a pole of $m_{[0,n]}(z)$ ($m_{[0,\infty]}$, respectively) and it can be characterized in analytic terms. Due to the locally uniform convergence of  $m_{[0,n]}$ to $m_{[0,\infty)}$  \cite{DD07}, the sequence $(\lambda_n)_{n=0}^\infty$ converges to  $\lambda_\infty$  (Corollary \ref{lconvergence}). Our main interest is the rate of this convergence. In particular we show its dependence of the placement of the eigenvalue $\lambda_\infty$ in $\Comp^+\cup\Real$. 

Our paper is organized as follows:  
\begin{itemize}
\item We give various analytic representations of the function \eqref{mform}, choosing in particular as our starting point
$$
\frac{-1}{m_{[0,\infty)}(z)}={a_0-z+b_0^2 \int_{t_3}^{t_4} \frac{d\mu(t)}{t-z}},
$$
where $\mu$ is some probability measure, see Theorem \ref{rep}.
\item In the case $\lambda_\infty\notin[t_3,t_4]$ we show that the convergence rate of $\lambda_n$ to $\lambda_\infty$ is exponential, with the base of the exponent increasing with the distance of $\lambda_\infty$ from $[t_3,t_4]$: see Theorem \ref{main-geom} below.
\item However, the $\lambda_n$'s tend to arrange themselves in branches spiraling into $\lambda_\infty$ and some of these branches can get trapped in the real axis for a number of iterations $n_0$ which can be relatively large when $\lambda_\infty$ is close to $[t_3,t_4]$. 
We show examples with different numbers of branches and compute an estimate for $n_0$ in Theorem \ref{N}.
\item In the case when $\lambda_\infty\in[t_3,t_4]$ we build an example to show that the convergence rate is in general worse than exponential.
\item In the concluding remarks we review the possible cases from the numerical point of view.
\end{itemize}

\section{Holomorphic representations of the $m$-function}\label{Prel}

We start with reviewing the spectral properties of the matrices $H_{[0,n]}$ and $H_{[0,\infty]}$. The matrix $H_{[0,n]}$ is selfadjoint in the indefinite inner-product space with the fundamental symmetry given by $\eta_n=[-1]\oplus I_n$ ($\eta_n$-pseudo-Hermitian). 
Consequently one of the following four possibilities applies:
\begin{itemize}
\item[(i)] $H_{[0,n]}$ is similar to a diagonal matrix with real entries, except two complex conjugate entries $\lambda_n\in\Comp^+$, $\bar\lambda_n\in\Comp^-$. The  eigenvectors $f_n, g_n$ corresponding to the eigenvalues $\lambda_n,\bar\lambda_n$ of $H_{[0,n]}$  satisfy 
 $\seq{f_n|\eta_n f_n}=\seq{g_n|\eta_n g_n}=0$, $\seq{f_n|\eta_n g_n}\neq0$.
 \item[(ii)] $H_{[0,n]}$ is similar to a diagonal matrix with real entries and there is precisely one eigenvalue $\lambda_n$
with the corresponding eigenvector $f_n$ satisfying $\seq{f_n|\eta_n f_n}<0$.
 \item[(iii)] $H_{[0,n]}$ is similar to a block-diagonal matrix with all the blocks real and one-dimensional,  except one block of the form 
 $$
 \begin{pmatrix}\lambda_n & 1\\ 0 &\lambda_n\end{pmatrix}\text{ with }\lambda_n\in\Real.
 $$
 The  eigenvector $f_n$ corresponding to the eigenvalue $\lambda_n$ of $H_{[0,n]}$  satisfies 
 $\seq{f_n|\eta_n f_n}=0$.
 \item[(iv)] $H_{[0,n]}$ is similar to a block-diagonal matrix with all the blocks real and one-dimensional,  except one block of the form 
 $$
 \begin{pmatrix}\lambda_n & 1 & 0\\ 0 &\lambda_n & 1\\ 0 & 0 & \lambda_n\end{pmatrix}\text{ with }\lambda_n\in\Real.
 $$
 The  eigenvector $f_n$ corresponding to the eigenvalue $\lambda_n$ of $H_{[0,n]}$  satisfies 
 $\seq{f_n|\eta_n f_n}=0$.
\end{itemize}
The cases (iii) and (iv) are non-generic, i.e. the set of all matrices $H_{[0,n]}$ for which one of them applies has measure zero. 
We refer the reader to \cite{GLR} for the full canonical form of matrices selfadjoint in indefinite inner-product spaces, which gives also a full description of the eigenvectors. We observe that the matrix $H_{[0,n]}$ may jump back and forth with $n$ among the four types above. 

The spectral properties of the infinite matrix $H_{[0,\infty)}$, understood as an operator on $\ell^2$, are more tricky: we refer the reader to \cite{JL85,JLT} for a full description and for canonical models. Here we note only that again there are essentially two possibilities: 
\begin{itemize}
\item[(i')] $H_{[0,\infty)}$ is similar to an orthogonal sum of a bounded selfadjoint operator in a Hilbert space and a diagonal matrix with  two complex conjugate entries $\lambda_n\in\Comp^+$, $\bar\lambda_n\in\Comp^-$. The  eigenvectors $f_n, g_n$ corresponding to the eigenvalues $\lambda_n,\bar\lambda_n$ of $H_{[0,\infty)}$  satisfy 
 $\seq{f_n|\eta_n f_n}=\seq{g_n|\eta_n g_n}=0$, $\seq{f_n|\eta_n g_n}\neq0$.
\item[(ii')] The spectrum of $H_{[0,\infty)}$ is real and $H_{[0,\infty)}$ has a (unique) real eigenvalue with the corresponding  eigenvector $f_\infty$ satisfying $\seq{f_\infty|\eta f_\infty}\leq0$. 
\end{itemize}
In the (ii') case the Jordan chain corresponding to $\lambda_\infty$ is again of length not greater than three. 

Now we specify the theory developed in  \cite{De03,DD04,DD07} to the case we are dealing with in the present work. Besides the matrices $H_{[0,\infty)}$ and $H_{[0,n]}$ defined in \eqref{H0infty} and \eqref{H0n}, we shall use the following truncations of the matrix $H_{[0,\infty)}$
\begin{equation}\label{H1n}
H_{[1,n]}=
\begin{pmatrix}
   a_1 & b_1 &  \\
        b_1   &   a_1  & \ddots \\
            & \ddots & \ddots & b_{n-1}\\
  & & b_{n-1} & a_n \\
\end{pmatrix},\qquad  n=1,2,\dots.
\end{equation}
Furthermore,  $H_{[1,\infty)}$ will stand for the infinite, symmetric Jacobi matrix with $(a_j)_{j=1}^\infty$ on the main and $(b_j)_{j=1}^\infty$ on the second diagonals. Similarly to \eqref{mform} and \eqref{mnform} we define the functions
\begin{equation}\label{m1}
m_{[1,n]}(z)=\seq{ e_1|(H_{[1,n]}-z)^{-1}e_1},\quad m_{[1,\infty)}(z)=\seq{e_1 |(H_{[1,\infty)}-z)^{-1}e_1}.
\end{equation}
Here $e_j$ stands for the $j$--th vector of the canonical basis of $\ell^2$. We call the functions appearing in \eqref{mform}, \eqref{mnform} and \eqref{m1} the \textit{$m$--functions} of the corresponding Jacobi matrix.  We refer the reader to \cite{GS} for a treatment of  $m$--functions of symmetric Jacobi matrices appearing in \eqref{m1}. The functions $m_{[1,n]}$ ($n\in\mathbb Z_+$) and $m_{[1,\infty)}$ are analytic in the open upper half-plane $\Comp^+$. The function  $m_{[0,\infty)}$ ($m_{[0,n]}$) is analytic in the upper half-plane, except $\lambda_\infty$ ($\lambda_n$, respectively).


Moreover, the Schur complement argument provides the  following crucial  relations \cite{DD04,GS}
\begin{equation}\label{m01}
m_{[0,n]}(z)=\frac1{z-a_0-b_0^2\ m_{[1,n]}(z)},\quad z\in\dC^+\setminus\{\lambda_n\},\ n\in\Zp,
\end{equation}
\begin{equation}\label{m01infty}
m_{[0,\infty)}(z)=\frac1{z-a_0-b_0^2\ m_{[1,\infty)}(z)},\quad z\in\dC^+\setminus\{\lambda_\infty\}.
\end{equation}
 Let us now recall the definition of  the class  $\mathcal{N}_1$. 
By $\mathcal{N}_1$ we define the set of generalized Nevanlinna functions with one negative square, that is the functions of one of the three forms
\begin{equation}\label{GenDes1}
\frac{(z-\alpha)(z-\overline{\alpha})}{(z-\beta)(z-\overline{\beta})}\varphi(z),
\end{equation}
\begin{equation}\label{GenDes2}
 \frac1{(z-\beta)(z-\overline{\beta})}\varphi(z),
\end{equation}
\begin{equation}\label{GenDes3}
 {(z-\alpha)(z-\overline{\alpha})}\varphi(z),
\end{equation}
where $\alpha$, $\beta$ are complex numbers and $\varphi$ is a Nevanllina function, i.e. $\varphi$ is holomorphic 
in $\dC_+$ and maps $\dC_+$ into $\dC^+\cup\Real$.  We refer the reader to   \cite{DHS1,DLLSh} for  equivalent definitions. 
Let us now formulate the theorem which fixes the subclass of $\mathcal{N}_1$ functions to be investigated in the present work:

\begin{theorem}\label{rep}
Let $m$ be a meromorphic function in the open upper half plane. The following conditions are equivalent.
\begin{itemize}
\item[(i)] There exist $\lambda_\infty\in\Comp^+\cup\Real$, $d\in\Real$ and a nontrivial Borel measure $\sigma$ having all moments  finite and supported on an interval $[t_1,t_2]$ such that
\begin{equation}\label{n1model1}
 m(z)= \frac{1}{(z-\lambda_\infty)(z-\overline{\lambda}_\infty)}\left(z+d+\int_{t_1}^{t_2}\frac{d\sigma(t)}{t-z}\right),
\end{equation}
\item[(ii)] There exist $a_0\in\Real$, $b_0>0$ and a nontrivial Borel probability measure $\mu$ having all moments finite and supported on an interval $[t_3,t_4]$ such that
\begin{equation}\label{n1model2}
\frac{-1}{m(z)}={a_0-z+b_0^2 \int_{t_3}^{t_4} \frac{d\mu(t)}{t-z}}
\end{equation}
\item[(iii)] There exist a matrix $H_{[0,\infty)}$ of the form \eqref{H0infty} 
with bounded entries $a_j\in\Real$, $b_j>0$, $j\in\mathbb{Z}_+$ such that
\begin{equation}\label{n1model4}
m(z)=m_{[0,\infty)}(z):=-\seq{e_0|(H_{[0,\infty)}-z)^{-1}e_0}.
\end{equation}
 \end{itemize}
Furthermore, the parameters $\lambda_\infty$ and $d$ and the measure $\sigma$ in (i), $a_0, b_0$ and $\mu$ in (ii),  and $a_j, b_j$, $j\in\mathbb{Z}_+$ in (iii) are uniquely determined;  the numbers $a_0$ and $b_0$ in statements (ii) and (iii) coincide and $\lambda_\infty$ from statement (i) is the (unique) eigenvalue of nonpositive type of the operator $H_{[0,\infty)}$ from statement (iii). 
 \end{theorem}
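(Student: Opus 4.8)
The plan is to establish $(iii)\Leftrightarrow(ii)$ by classical Jacobi-matrix theory and $(ii)\Leftrightarrow(i)$ by an explicit partial-fraction computation supported by the $\mathcal{N}_1$-calculus of \cite{De03,DD04,DD07}. For $(iii)\Rightarrow(ii)$ I would rewrite the Schur-complement identity \eqref{m01infty} as $-1/m_{[0,\infty)}(z)=a_0-z+b_0^2\,m_{[1,\infty)}(z)$; since $H_{[1,\infty)}$ is a bounded selfadjoint operator on $\ell^2$, the spectral theorem gives $m_{[1,\infty)}(z)=\seq{e_1|(H_{[1,\infty)}-z)^{-1}e_1}=\int d\mu(t)/(t-z)$ with $\mu=\seq{E(\cdot)e_1|e_1}$ a Borel probability measure supported on the (compact) spectrum of $H_{[1,\infty)}$, hence with all moments finite, and with infinitely many points of increase because every $b_j>0$ forces $e_1,H_{[1,\infty)}e_1,H_{[1,\infty)}^2 e_1,\dots$ to be linearly independent; letting $[t_3,t_4]$ be the convex hull of $\supp\mu$ yields \eqref{n1model2}. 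For $(ii)\Rightarrow(iii)$ I would reverse this: to a nontrivial (infinitely supported), compactly supported probability measure $\mu$ with all moments one associates, by the classical correspondence (see \cite{GS}), a unique bounded symmetric Jacobi matrix $H_{[1,\infty)}$ with $m$-function $\int d\mu(t)/(t-z)$, prepend the row and column determined by $a_0,b_0$ of \eqref{n1model2} in the sign pattern of \eqref{H0infty} to form $H_{[0,\infty)}$, and note that \eqref{m01infty} together with \eqref{n1model2} forces $m_{[0,\infty)}=m$.

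For $(i)\Leftrightarrow(ii)$ I would argue by direct computation. With $D(z)=(z-\lambda_\infty)(z-\overline\lambda_\infty)$ and $N(z)=z+d+\int d\sigma(t)/(t-z)$, so that $m=N/D$ in case $(i)$, dividing out the pole at infinity gives
$$-\frac1{m(z)}=-z+a_0+q(z),\qquad a_0=d+2\,\RE\lambda_\infty,$$
where $q(z)=\bigl(c_0+\int(t-a_0)\,d\sigma(t)/(t-z)\bigr)/N(z)$ for an explicitly computable real constant $c_0$, and $z\,q(z)\to c_0$ as $z\to\infty$. The point is then to recognise $q$ as $b_0^2\int d\mu(t)/(t-z)$ with $\mu$ a probability measure: that $q$ is an ordinary Nevanlinna function uses that $m$ is of the form \eqref{GenDes2}, hence $m\in\mathcal{N}_1$, hence $-1/m\in\mathcal{N}_1$ by the Krein--Langer inversion rule, while the summand $-z+a_0$ already realises the single negative square at infinity, so that $q\in\mathcal{N}_0$; the asymptotics then force $b_0^2=-c_0>0$ and make the representing measure $\mu$ of $q/b_0^2$ a probability measure with compact support (so all moments are finite), since $-1/m$ continues meromorphically across $\Real$ off a bounded set. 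Running the same computation backwards from \eqref{n1model2} one has $m(z)=1/(z-a_0-b_0^2 g(z))$ with $g(z)=\int d\mu(t)/(t-z)$; here $a_0-z+b_0^2 g\in\mathcal{N}_1$ has a single generalised zero of nonpositive type at some $\lambda_\infty\in\dC^+\cup\Real$, and clearing that zero presents $m$ as a quotient $N/D$ with $D=(z-\lambda_\infty)(z-\overline\lambda_\infty)$ and $N$ an ordinary Nevanlinna function whose integral representation, read off from its linear growth at infinity, is $z+d+\int d\sigma(t)/(t-z)$ --- that is, case $(i)$.

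For uniqueness and the final identification: in $(ii)$, $\mu$ is recovered from $g=m_{[1,\infty)}$ by Stieltjes inversion and $a_0,b_0$ from the expansion $-1/m(z)=-z+a_0-b_0^2 z^{-1}+\cdots$ at infinity; in $(iii)$, uniqueness of the $a_j,b_j$ follows from uniqueness in $(ii)$ and the bijection between $\mu$ and $H_{[1,\infty)}$, and the two occurrences of $a_0$ coincide by construction; in $(i)$, $\lambda_\infty$ is pinned down as the pole of $m$ in $\dC^+$ if $m$ is not holomorphic there and otherwise as the unique real generalised pole of nonpositive type of $m$, after which $N=Dm$ and hence $d$ and $\sigma$ are determined by the Nevanlinna representation of $N$. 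Tracing the constructions above, this same $\lambda_\infty$ is the generalised pole of nonpositive type of $m_{[0,\infty)}$, which by the canonical forms recalled in cases $(i')$--$(ii')$ of Section~\ref{Prel} (see \cite{GLR,JL85,JLT}) is exactly the eigenvalue of $H_{[0,\infty)}$ with $\seq{f_\infty|\eta f_\infty}\le0$.

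I expect the main obstacle to be the positivity-and-compact-support statement for the remainder $q$ in the step $(i)\Rightarrow(ii)$ --- equivalently, the claim that exactly one negative square is carried through the manipulation and is absorbed by the $-z$ term. This is where the one-negative-square hypothesis must be used in an essential way, and where it is cleanest to lean on the operator model of $\mathcal{N}_1$ functions from \cite{De03,DD04,DD07} rather than to count negative squares of kernels by hand. A secondary point requiring care is the boundary case $\lambda_\infty\in\Real$, where the generalised pole may have a Jordan chain of length up to three --- so that $\sigma$ may carry an atom at $\lambda_\infty$ --- and the sign condition degenerates to $\seq{f_\infty|\eta f_\infty}=0$; this has to be tracked through the identification step.
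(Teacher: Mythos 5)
Your proposal is correct in substance and shares the paper's two main ingredients (the Schur-complement identity \eqref{m01infty} plus the classical moment-problem correspondence for (ii)$\Leftrightarrow$(iii), and the $\mathcal{N}_1$ factorization theory for the rest), but it organizes the remaining implications differently. The paper closes the cycle by proving (iii)$\Rightarrow$(i) and (i)$\Rightarrow$(ii): for the former it expands the resolvent at infinity to get the asymptotic series \eqref{asymp}, compares this with the three canonical forms \eqref{GenDes1}--\eqref{GenDes3} to rule out all but \eqref{GenDes2}, and invokes the Hamburger--Nevanlinna theorem to represent $\varphi$; for the latter it simply cites the Schur algorithm of \cite{De03}. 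You instead prove (i)$\Leftrightarrow$(ii) directly in both directions by explicit polynomial division, identifying $a_0=d+2\,\RE\lambda_\infty$ and the remainder $q$ by hand. Your route makes the bookkeeping of the coefficients transparent and avoids the detour through the operator for the (i)$\Rightarrow$(ii) direction, at the cost of having to justify by hand the one step the paper outsources to \cite{De03}.

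That step is the only soft spot, and you have correctly identified it: the inference ``$-1/m\in\mathcal{N}_1$ and $-z+a_0$ carries one negative square, hence $q\in\mathcal{N}_0$'' is not valid as stated, since negative squares of a sum are only subadditive and do not subtract. The clean way to close it is the one you point to in your last paragraph: note that $\infty$ is the (unique) generalized pole of nonpositive type of $-1/m$ because of the $-z$ asymptotics, and then apply the factorization/operator-model results for $\mathcal{N}_1$ functions (\cite{DLLSh,L,De03}) to conclude that the remainder after splitting off $-z+a_0$ is an ordinary Nevanlinna function with $zq(z)\to c_0<0$, whence $b_0^2=-c_0>0$ and $\mu=-q/c_0$ is a compactly supported probability measure. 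Since the paper itself resolves exactly this point by citation, your proof is at the same level of rigor; your additional care about the case $\lambda_\infty\in\Real$ (possible atom of $\sigma$ at $\lambda_\infty$, Jordan chain of length up to three) is appropriate and consistent with the discussion in Section \ref{Prel}.
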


\begin{proof} The equivalence (ii)$\Leftrightarrow$(iii) is a consequence of equation \eqref{m01infty} and the classical theory which sets a correspondence between the functions $m_{[1,\infty)}$ and the Jacobi matrices $H_{[1,\infty)}$, see e.g. \cite{Ach61,GS}.

(iii)$\To$(i) Let $m=m_{[0,\infty)}$. 
From the construction in \cite{JL85} it follows that $m$  belongs to the class $\mathcal{N}_1$ and hence it has one of the forms \eqref{GenDes1}--\eqref{GenDes3}. 

Furthermore, expanding the resolvent  into a geometric series at infinity one sees that $m$ necessarily possesses  an asymptotic expansion at infinity
\begin{equation}\label{asymp}
m(z)=-\seq{e_0|(H_{[0,\infty)}-z)^{-1}e_0}=
\frac{1}{z}-\frac{s_{1}}{z^2}
-\dots-\frac{s_{2n}}{z^{2n+1}}-\cdots,
\end{equation}
with $s_j\in\Real$ ($j=1,2,\dots$)  (see \cite{DD04,DD07}). Comparing the forms \eqref{GenDes1}--\eqref{GenDes3} with  \eqref{asymp}, one gets by the  Hamburger--Nevanlinna theorem \cite{Ach61}  that the function $\varphi$ in \eqref{GenDes1}--\eqref{GenDes3}
can be represented in the form
\begin{equation}\label{phi}
\varphi(z)=z+d+\int_{t_1}^{t_2}\frac{d\sigma(t)}{t-z},
\end{equation}
where  $d\in\dR$, and $\sigma$ is a measure with all moments finite. 
Furthermore, comparing the expansions of \eqref{GenDes1}, \eqref{GenDes2} and \eqref{GenDes3} with \eqref{asymp} we can see that case \eqref{GenDes2} applies. In consequence,
\begin{equation}\label{n1model}
 m(z)= \frac{1}{(z-\lambda_\infty)(z-\overline{\lambda}_\infty)}\left(z+d+\int_{t_1}^{t_2}\frac{d\sigma(t)}{t-z}\right),
 \end{equation}
 Observe that the measure $\sigma$  cannot be a finitely supported (trivial) measure, since $b_j>0$ for all $j\in\Zp$ and in consequence neither $m_{[1,\infty)}$ nor $m_{[0,\infty)}$ are rational functions. The uniqueness of the parameters $\lambda_\infty$ and $d$  and of the measure $\sigma$ follows from the theory of $\mathcal{N}_1$ functions, see e.g. \cite{DHS1}. The fact that $\lambda_\infty$ is the unique eigenvalue of nonpositive type of $H_{[0,\infty)}$ follows e.g. from Ref. \cite{DD07} or \cite{JL85}.
  
 (i)$\To$(ii) Using the algorithm proposed in~\cite{De03} (see also~\cite{ADL07}), one can find that $m$ defined by~\eqref{n1model}
can be uniquely represented as
\begin{equation}
m(z)=\frac1{z-a_0-b_0^2\ m_1(z)},
\end{equation} 
where $
m_1(z)=\int_{t_3}^{t_4} \frac{d\mu(t)}{t-z},
$ is a Nevanlinna function with finite moments. 
\end{proof}
 
 \begin{remark}\label{gaps}
 
{\rm Already at this point we can say something about the influence of the the measure $\mu$ (spectrum of the matrix $H_{[1,\infty)}$) on the position of $\lambda_\infty$.
 
1) Conditions (i) and (ii) in Ref. \cite{JL85} tell us that $\lambda_\infty\in[t_3,t_4]$ if and only if
\begin{equation}\label{JL85}
\int_{t_3}^{t_4}|t-\lambda_\infty|^{-2} d\mu(t)\leq b_0^{-2},   \qquad  a_0-{\lambda_\infty} +b_0^2\int_{t_3}^{t_4} (t-{\lambda_\infty})^{-1}d\mu(t)=0.
\end{equation}
(In particular, since $b_0$ is strictly positive, for the first of these conditions to be true, $|t-\lambda_\infty|^{-2}$ needs to be a $\mu$-integrable function; the second condition is just the specialization of eq. \eqref{n1model2} to the case $z=\lambda_{\infty}$ and we introduce it here to fully characterize $\lambda_\infty$ itself). It follows that if the measure $\mu$ is sufficiently dense $\lambda_\infty$ cannot be on $[t_3,t_4]$, i.e. $\mu$ ``repels" the point $\lambda_\infty$. This is the case for Examples \ref{jumping} and \ref{jumping2} below.
 
2) If instead $\mu$ has gaps, these gaps tend to trap $\lambda_\infty$. To see this, let's assume that the spectrum of $H_{[1,\infty)}$ has a gap $(t_5,t_6)\subset [t_3,t_4]$ and that $a_0\in(t_5,t_6)$. From the definition of $H_{[0,\infty)}$, eq. \eqref{H0infty}, it's obvious that for $b_0=0$ the point $a_0$ is an eigenvalue of $H_{[0,\infty)}$ with the corresponding eigenvector $f$ satisfying $\seq{f|\eta f}\leq0$, i.e. $a_0=\lambda_\infty$. We now increase $b_0=0$; applying Rouch\'e's theorem to $-1/m_{[0,\infty)}$ and remembering that if $\lambda_\infty\notin\Real$ then $\bar\lambda_\infty$ is also an eigenvalue, we see that $\lambda_\infty$ moves  along the real axis until it  meets another part of the spectrum of $H_{[0,\infty)}$ (that is either an eigenvalue, or a part of the continuous spectrum, see Ref. \cite{SWW11,SWW14} for a detailed analysis of a similar problem). This means that if $\lambda_\infty \in (t_5,t_6)$, then for a small change of parameters $a_0,b_0$ the eigenvalue $\lambda_\infty$ stays in the gap. We shall see one such case in Example \ref{jumping3}.} 
\end{remark}
 
As already mentioned, $m_{[0,n]}$ is the $[n/n+1]$  Pad\'e approximant of $m_{[0,\infty)}$ and it is an $\mathcal{N}_1$ function for $n\geq 1$. Consequently it can be represented in one of the forms  \eqref{GenDes1}--\eqref{GenDes3}. As we have just done in Theorem \ref{rep} for $m_{[0,\infty)}$, 
one can specify this representation:
\begin{proposition}\label{rep2}
Each function $m_{[0,n]}$ $(n=1,2,\dots)$ admits a representation
\begin{equation}\label{n1modeln}
 m_{[0,n]}(z)= \frac{1}{(z-\lambda_n)(z-\overline{\lambda}_n)}\left(z+d_n+\int\frac{d\mu_n(t)}{t-z}\right),
 \end{equation}
with $\lambda_n\in\Comp^+$, $d_n\in\Real$ and $\mu_n$ a finitely supported measure. Furthermore, the parameters $\lambda_n$ and $d_n$ and the measure $\mu_n$ are  uniquely determined and $\lambda_n$ is the unique eigenvalue of nonpositive type of $H_{[0,n]}$.
\end{proposition}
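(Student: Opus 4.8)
The plan is to carry out, in the rational (finite-matrix) setting, the same argument used in the proof of Theorem \ref{rep} for the direction (iii)$\To$(i), and then to quote the Pontryagin-space dictionary to identify $\lambda_n$. First I would record the structural facts about $m_{[0,n]}$. By the Schur complement identity \eqref{m01}, $m_{[0,n]}$ is a rational transform of the Weyl function $m_{[1,n]}$ of the finite symmetric Jacobi matrix $H_{[1,n]}$, hence rational; equivalently, Cramer's rule expresses $m_{[0,n]}(z)=-\seq{e_0|(H_{[0,n]}-z)^{-1}e_0}$ as the ratio of $\det(H_{[1,n]}-z)$ (degree $n$) to $\det(H_{[0,n]}-z)$ (degree $n+1$). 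Expanding the resolvent at infinity as in \eqref{asymp} shows $m_{[0,n]}(z)=z^{-1}-s^{(n)}_1 z^{-2}-\cdots$ with real Taylor coefficients (the moments $\seq{e_0|H_{[0,n]}^j e_0}$).

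Since it has already been noted that $m_{[0,n]}\in\mathcal{N}_1$ for $n\ge 1$, the function has one of the three forms \eqref{GenDes1}--\eqref{GenDes3} with $\varphi$ Nevanlinna. Because $m_{[0,n]}$ and the polynomial multipliers $(z-\alpha)(z-\overline{\alpha})$, $(z-\beta)(z-\overline{\beta})$ are rational, $\varphi$ is a rational Nevanlinna function, and so, by the classical partial-fraction (integral) representation of such functions, $\varphi(z)=cz+d+\int\frac{d\mu_n(t)}{t-z}$ with $c\ge 0$, $d\in\Real$ and $\mu_n$ finitely supported. Comparing with the asymptotics $m_{[0,n]}(z)\sim z^{-1}$ exactly as in the proof of Theorem \ref{rep}: form \eqref{GenDes3} forces the right-hand side to grow at infinity, and in form \eqref{GenDes1} the prefactor tends to $1$ so matching would require $\mu_n$ to have total mass $-1<0$; both are impossible. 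Hence form \eqref{GenDes2} applies, and the matching forces $c=1$. Setting $\lambda_n:=\beta$ and $d_n:=d$ yields \eqref{n1modeln}. Here $\lambda_n\in\Comp^+$ in the generic cases (i)--(ii), while in the non-generic cases (iii)--(iv) $\lambda_n$ is real, $(z-\lambda_n)(z-\overline{\lambda}_n)=(z-\lambda_n)^2$, and $\varphi_n$ carries the appropriate zero at $\lambda_n$, handled just as the length-three Jordan chain is for $H_{[0,\infty)}$.

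For uniqueness I would repeat the argument of Theorem \ref{rep}: the pair $\{\lambda_n,\overline{\lambda}_n\}$ is the generalized pole of nonpositive type of the $\mathcal{N}_1$ function $m_{[0,n]}$, hence intrinsic (see \cite{DHS1}); alternatively, the algorithm of \cite{De03} applied to the rational datum $m_{[0,n]}$ reconstructs $H_{[0,n]}$ uniquely. Once $\lambda_n$ is fixed, $\varphi_n(z)=(z-\lambda_n)(z-\overline{\lambda}_n)m_{[0,n]}(z)$ is determined and so is its Herglotz representation, giving uniqueness of $d_n$ and $\mu_n$. Finally, $\lambda_n$ is the eigenvalue of nonpositive type of $H_{[0,n]}$ for the same reason as in the infinite case: $m_{[0,n]}$ is the compressed resolvent of the $\eta_n$-selfadjoint operator $H_{[0,n]}$ acting in the Pontryagin space $(\Comp^{n+1},\eta_n)$ with one negative square, and that negative square is located precisely at the generalized pole $\lambda_n$, which therefore coincides with the eigenvalue $\lambda_n$ identified in cases (i)--(iv) (cf. \cite{GLR,DD07}).

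The whole argument is routine once Theorem \ref{rep} is in hand; the only step that truly uses finiteness is the passage from ``$\varphi$ Nevanlinna'' to ``$\varphi$ a rational Nevanlinna function with finitely supported spectral measure'', and the only mild nuisance is bookkeeping the non-generic Jordan cases. An alternative, more computational route that avoids treating $\mathcal{N}_1$ as a black box is to put $H_{[0,n]}$ into Jordan form, decompose $m_{[0,n]}$ into partial fractions, observe that the residue at each positive-type eigenvalue $t$ is negative (the sign in \eqref{mnform} together with $\seq{f|\eta_n f}>0$), and use the division $\frac{(z-\lambda_n)(z-\overline{\lambda}_n)}{z-t}=(z+t-2\RE\lambda_n)+\frac{|t-\lambda_n|^2}{z-t}$ to read off $\varphi_n(z)=z+d_n+\int\frac{d\mu_n(t)}{t-z}$ with $\mu_n\ge 0$ finitely supported, the leading coefficient $1$ being forced by $\sum_j\seq{e_0|P_j e_0}=\seq{e_0|e_0}=1$ over the Riesz projections $P_j$.
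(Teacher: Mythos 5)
Your argument follows the route the paper itself intends: the paper's ``proof'' of Proposition \ref{rep2} consists precisely of the observations that $m_{[0,n]}$, being the Pad\'e approximant, is an $\mathcal{N}_1$ function, that one repeats the asymptotic-matching argument of Theorem \ref{rep} to single out the form \eqref{GenDes2} with a rational (hence finitely supported) Nevanlinna factor normalized by the $z^{-1}$ expansion at infinity, and that uniqueness and the identification of $\lambda_n$ as the generalized pole of nonpositive type are delegated to \cite{DLLSh,L,DHS1} --- exactly the steps you carry out, only in more detail. One minor bookkeeping slip that does not affect the result: it is in case (ii) (real, diagonalizable $\lambda_n$) that $\varphi_n$ must vanish at $\lambda_n$ so that $m_{[0,n]}$ has only a simple pole there, whereas in the Jordan cases (iii)--(iv) one has $\varphi_n(\lambda_n)\neq 0$ or a point mass of $\mu_n$ at $\lambda_n$, respectively.
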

The details of the proof of uniqueness can be found e.g. in \cite{DLLSh,L,DHS1}.   
The uniqueness in both Theorem \ref{rep} and Proposition \ref{rep2} guaranties that $\lambda_n$ ($n=1,2,\dots$) and $\lambda_\infty$ are properly defined. In the literature they are called the generalized poles of nonpositive type of the corresponding $\mathcal{N}_1$ function, see \cite{L}. Using the classical result saying that $m_{[1,n]}$ converges to $m_{[1,\infty]}$, see \cite{GS, Ach61,Simon98}, 
one can prove --via eq. \eqref{m01}-- the following convergence result, cf.  \cite{DD07}:

\begin{proposition}\label{Conv}
The functions $m_{[0,n]}$ $(n\in\Zp)$ converge  to $m_{[0,\infty)}$ as  $n\to\infty$, locally uniformly on $\dC_+\setminus([t_1,t_2]\cup\set{\lambda_\infty})$.
\end{proposition}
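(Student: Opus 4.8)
The plan is to transfer the classical convergence $m_{[1,n]}\to m_{[1,\infty)}$ through the Schur-complement identities \eqref{m01}--\eqref{m01infty}. Put
\[
D_n(z):=z-a_0-b_0^2\,m_{[1,n]}(z),\qquad D_\infty(z):=z-a_0-b_0^2\,m_{[1,\infty)}(z).
\]
Since $H_{[1,n]}$ is symmetric and $m_{[1,\infty)}$ is the Stieltjes transform of a measure on $\dR$, both $m_{[1,n]}$ and $m_{[1,\infty)}$ are holomorphic on $\dC^+$, hence so are $D_n,D_\infty$; and \eqref{m01}, \eqref{m01infty} read $m_{[0,n]}=1/D_n$ on $\dC^+\setminus\{\lambda_n\}$ and $m_{[0,\infty)}=1/D_\infty$ on $\dC^+\setminus\{\lambda_\infty\}$. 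From $D_n-D_\infty=-b_0^2(m_{[1,n]}-m_{[1,\infty)})$ and the cited classical fact that $m_{[1,n]}\to m_{[1,\infty)}$ locally uniformly on $\dC^+$ (\cite{GS,Ach61,Simon98}, applicable here since $H_{[1,\infty)}$ is bounded and the moment problem determinate), I get $D_n\to D_\infty$ locally uniformly on $\dC^+$.

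Next I would pin down the zeros of $D_\infty$ in $\dC^+$. As $D_\infty$ is holomorphic and finite on $\dC^+$, the function $m_{[0,\infty)}=1/D_\infty$ is zero-free there, and it has a pole at $z_0\in\dC^+$ precisely when $D_\infty(z_0)=0$. By Theorem \ref{rep} (equivalently \cite{JL85}) the only pole of $m_{[0,\infty)}$ in the upper half-plane is $\lambda_\infty$, which occurs only in case (i'), i.e.\ when $\lambda_\infty\in\dC^+$. Hence $D_\infty$ is zero-free on $\dC^+\setminus\{\lambda_\infty\}$, so for every compact $K\subset\dC^+\setminus\{\lambda_\infty\}$ the quantity $c:=\min_K|D_\infty|$ is strictly positive.

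For the conclusion I would run a Hurwitz-type estimate on such a $K$. Pick $n_0$ with $\sup_K|D_n-D_\infty|<c/2$ for $n\ge n_0$; then $|D_n|\ge c/2$ on $K$, so $D_n$ is zero-free on $K$ — in particular $\lambda_n\notin K$ and $m_{[0,n]}=1/D_n$ is genuinely holomorphic on $K$ — and
\[
\sup_K\bigl|m_{[0,n]}-m_{[0,\infty)}\bigr|=\sup_K\frac{|D_\infty-D_n|}{|D_n|\,|D_\infty|}\le\frac{2b_0^2}{c^2}\,\sup_K\bigl|m_{[1,n]}-m_{[1,\infty)}\bigr|\longrightarrow 0\quad(n\to\infty).
\]
Since $[t_1,t_2]\subset\dR$ is disjoint from $\dC^+$, every compact subset of $\dC_+\setminus([t_1,t_2]\cup\{\lambda_\infty\})$ is of this form, and locally uniform convergence on that set follows. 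An analogous argument, now for the meromorphic continuations ($m_{[0,n]}$ being rational and $m_{[0,\infty)}$ continued via \eqref{n1model1}) and using $m_{[1,n]}\to m_{[1,\infty)}$ on $\dC\setminus\dR$ together with the symmetry $m(\overline{z})=\overline{m(z)}$, extends the statement to the full domain of holomorphy of $m_{[0,\infty)}$, which is what is needed for the Rouch\'e-type arguments near $\lambda_\infty$ used later.

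The single genuinely non-formal ingredient is the step identifying the zeros of $D_\infty$, namely that $\lambda_\infty$ is the \emph{unique} pole of $m_{[0,\infty)}$ in $\dC^+$; this is exactly the $\mathcal{N}_1$-structure recorded in Theorem \ref{rep}. Everything else — holomorphy on $\dC^+$, the identity $m_{[0,n]}-m_{[0,\infty)}=(D_\infty-D_n)/(D_nD_\infty)$, and upgrading the pointwise lower bound $|D_n|\ge c/2$ to a uniform error bound — is routine once the classical convergence $m_{[1,n]}\to m_{[1,\infty)}$ is granted, which is why one can simply write ``cf.\ \cite{DD07}''.
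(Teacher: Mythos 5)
Your proposal is correct and follows exactly the route the paper itself indicates: the paper gives no detailed proof of Proposition \ref{Conv}, only the remark that it follows from the classical convergence $m_{[1,n]}\to m_{[1,\infty)}$ via the Schur-complement identity \eqref{m01} (cf.\ \cite{DD07}), and your argument is a careful writing-out of precisely that sketch, with the zero-free lower bound on $D_\infty$ and the resulting quotient estimate supplying the details the paper leaves implicit.
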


 
Further generalization to different types of $\eta$-selfadjoint Jacobi matrices can be found in \cite{DD07}. As a consequence we have the following corollary (cf. \cite{LaLuMa}):

\begin{corollary}\label{lconvergence} The pole $\lambda_n$ of $m_{[0,n]}$ converges  to the pole $\lambda_\infty$ of $m_{[0,\infty)}$ as $n\to\infty$.
\end{corollary}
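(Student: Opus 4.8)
The plan is to extract $\lambda_\infty$ and $\lambda_n$ as the unique generalized poles of nonpositive type from the representations \eqref{n1model1} and \eqref{n1modeln}, and then deduce their convergence from the locally uniform convergence $m_{[0,n]}\to m_{[0,\infty)}$ supplied by Proposition \ref{Conv}, via an argument-principle / Rouch\'e computation. The point $\lambda_\infty$ (and its conjugate, if non-real) is by Theorem \ref{rep} the unique pole of $m_{[0,\infty)}$ in $\Comp^+\cup\Real$ not lying in the real support of the auxiliary measure; equivalently, it is characterized as the unique zero in $\Comp^+$ of the holomorphic function $z\mapsto z-a_0-b_0^2\,m_{[1,\infty)}(z)$ appearing in \eqref{m01infty} (with a corresponding statement on the real line, where one must argue via the finite-rank indefinite structure that the zero is simple generically or use the Jordan-chain bound of length $\leq 3$ recorded in case (ii')). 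Likewise, by Proposition \ref{rep2}, $\lambda_n$ is the unique pole of $m_{[0,n]}$ of nonpositive type, i.e. the unique zero of $z\mapsto z-a_0-b_0^2\,m_{[1,n]}(z)$ in $\Comp^+$ (resp. the single offending real zero).

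The core step is then standard complex analysis. Fix first the \emph{non-real} case $\lambda_\infty\in\Comp^+$. Choose $r>0$ so small that the closed disc $\overline{D}(\lambda_\infty,r)$ lies in $\Comp^+\setminus[t_1,t_2]$ and contains no pole of $m_{[0,\infty)}$ other than $\lambda_\infty$. On the compact set $\overline{D}(\lambda_\infty,r)$ the denominators $g_n(z):=z-a_0-b_0^2 m_{[1,n]}(z)$ converge uniformly to $g_\infty(z):=z-a_0-b_0^2 m_{[1,\infty)}(z)$ by Proposition \ref{Conv} (or directly by the classical convergence of $m_{[1,n]}$ to $m_{[1,\infty)}$). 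Since $g_\infty$ has exactly one zero in the disc, $|g_\infty|$ is bounded below on the circle $|z-\lambda_\infty|=r$, so for $n$ large $|g_n-g_\infty|<|g_\infty|$ there; Rouch\'e's theorem gives that $g_n$ also has exactly one zero in $D(\lambda_\infty,r)$, which must be $\lambda_n$ (the matrix $H_{[0,n]}$ has only one eigenvalue of nonpositive type, and for large $n$ it cannot sit near the real spectrum cluster because $\lambda_\infty\notin\Real$). Letting $r\downarrow 0$ yields $\lambda_n\to\lambda_\infty$. For the \emph{real} case $\lambda_\infty\in\Real$ the same Rouch\'e argument runs in a small disc about $\lambda_\infty$, but one must take slightly more care: $\lambda_\infty$ may be an endpoint issue for $[t_1,t_2]$ and $g_n$ need no longer be Nevanlinna-type near the real axis; here I would work with $-1/m_{[0,n]}$ and $-1/m_{[0,\infty)}$ directly, which are meromorphic across a neighbourhood of $\lambda_\infty$ avoiding $[t_1,t_2]$ by the representation \eqref{n1model1}, apply Proposition \ref{Conv} on that neighbourhood minus $\lambda_\infty$ (noting the convergence extends to a deleted neighbourhood of $\lambda_\infty$ in $\Comp^+$ and, by reflection $m(\bar z)=\overline{m(z)}$, to a full punctured disc), and conclude the zero of $-1/m_{[0,n]}$ near $\lambda_\infty$ converges to $\lambda_\infty$.

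The main obstacle is the real case: in $\Comp^+$ the functions involved are Nevanlinna and Rouch\'e applies cleanly, but on $\Real$ one has to ensure that (a) $\lambda_\infty$ is genuinely isolated from the rest of $\sigma(H_{[0,\infty)})$, which follows from \eqref{n1model1} since a neighbourhood of $\lambda_\infty$ can be chosen disjoint from $\overline{[t_1,t_2]}$ and from the other poles, and (b) the possible Jordan block of size $2$ or $3$ at $\lambda_\infty$ does not spoil the counting — this is handled by counting zeros of $-1/m_{[0,\infty)}$ with multiplicity (a pole of $m_{[0,\infty)}$ of order $k$ is a zero of $-1/m_{[0,\infty)}$ of order $k$), so Rouch\'e still forces $g_n$ to have the matching total zero count nearby, all of which must coalesce at $\lambda_n$ since $H_{[0,n]}$ has a single nonpositive-type eigenvalue. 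I would note that this is exactly the route taken in \cite{DD07} and \cite{LaLuMa}, and cite those for the details of the indefinite bookkeeping rather than reproducing it.
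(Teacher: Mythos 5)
Your treatment of the case $\lambda_\infty\in\Comp^+$, and of a real $\lambda_\infty$ that is isolated from the rest of the spectrum, is essentially the paper's own argument (Proposition \ref{Conv} plus Rouch\'e, counting with multiplicity). The genuine gap is in the remaining case, which you dispose of with the claim that ``a neighbourhood of $\lambda_\infty$ can be chosen disjoint from $\overline{[t_1,t_2]}$ and from the other poles.'' That is false in general: the generalized pole of nonpositive type may be real and \emph{embedded} in the support of $\sigma$ (equivalently, a non-isolated spectral point of $H_{[0,\infty)}$); see condition \eqref{JL85} in Remark \ref{gaps}, Section \ref{realcase}, and Example \ref{realev}, where $\lambda_\infty=t_3=-1$ lies in $[t_3,t_4]$ and the convergence is only $\mathcal{O}(n^{-2})$. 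In that situation Proposition \ref{Conv} provides locally uniform convergence only on $\dC_+\setminus([t_1,t_2]\cup\set{\lambda_\infty})$, so there is no punctured disc centred at $\lambda_\infty$ on which your Rouch\'e comparison can be run: every neighbourhood of $\lambda_\infty$ meets $[t_1,t_2]$, where neither uniform convergence nor holomorphy of the limit function is available, and the reflection $m(\bar z)=\overline{m(z)}$ does not produce an analytic continuation across the support of the measure. Citing \cite{DD07} and \cite{LaLuMa} for the ``indefinite bookkeeping'' does not repair this, because what fails is the applicability of Rouch\'e near $\lambda_\infty$ itself, not the multiplicity count.

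The paper closes this case by a different mechanism, which you would need to supply. First, Rouch\'e applied away from the spectrum shows that all accumulation points of $(\lambda_n)$ lie in the spectrum of $H_{[0,\infty)}$, a compact set. Then one invokes the canonical form \eqref{GenDes2}: writing $m_{[0,n]}(z)=\bigl((z-\lambda_n)(z-\overline{\lambda}_n)\bigr)^{-1}\varphi_n(z)$ and $m_{[0,\infty)}(z)=\bigl((z-\lambda_\infty)(z-\overline{\lambda}_\infty)\bigr)^{-1}\varphi(z)$ with $\varphi_n,\varphi$ Nevanlinna, a subsequence $\lambda_{n_k}\to\lambda_0\neq\lambda_\infty$ would force $\varphi_{n_k}$ to converge to a Nevanlinna function $\varphi_0$ yielding a second factorization of $m_{[0,\infty)}$, contradicting the uniqueness of the Krein--Langer factorization of an $\mathcal{N}_1$ function (the same uniqueness already recorded in Theorem \ref{rep} and Proposition \ref{rep2}). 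This argument works uniformly and, as a bonus, also tidies up a smaller loose end in your isolated real case: Rouch\'e there gives you \emph{some} poles of $m_{[0,n]}$ near $\lambda_\infty$, but identifying one of them with the nonpositive-type eigenvalue $\lambda_n$ (rather than with an ordinary real eigenvalue drifting into the disc while $\lambda_n$ escapes elsewhere) is again most cleanly done through the uniqueness of the factorization.
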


\begin{proof}
If $\lambda\in\dC_+$ or is an isolated eigenvalue then this statement is a simple consequence of Proposition~\ref{Conv} and
the Rouch\'e theorem.

If instead $\lambda_\infty$ is real and is not an isolated eigenvalue then from Proposition~\ref{Conv} and
the Rouch\'e theorem we see that all the accumulation points of $\lambda_n$ lie in the spectrum of $H_{[0,\infty)}$, which is a compact set. 
Since both the functions
$m_{[0,n]}$ and $m_{[0,\infty)}$ belong to $\mathcal{N}_1$ and are of the type \eqref{GenDes2}, we have
\[
m_{[0,n]}(z)=\frac1{(z-\lambda_n)(z-\overline{\lambda}_n)}\varphi_n(z),\quad m_{[0,\infty)}(z)=\frac1{(z-\lambda_{\infty})(z-\overline{\lambda}_{\infty})}\varphi(z),
\]
where $\varphi_n$ and $\varphi$ are Nevanlinna functions and $r_n$ and $r$ are rational functions.
Suppose now that
there is a sub-sequence such that $\lambda_{n_k}\to\lambda_0\ne\lambda_\infty$. As a consequence, $\varphi_{n_k}$
should also converge to a Nevanlinna function $\varphi_0 \neq \varphi$ which contradicts the uniqueness of $\varphi$.
\end{proof}

\section{Convergence rates}

Now we are in a position to ask the principal question of this paper: 
\begin{itemize}
 \item[] \textit{What is the character of the convergence of $\lambda_n\to\lambda_\infty$? }
\end{itemize}
We mainly consider the situation when $\lambda_\infty$ is simple eigenvalue located outside the support of the measure 
$\mu$ in \eqref{n1model2}: we show a theoretical bound on the convergence rate and test it on examples.

\subsection{Theoretical results}
\hspace{2.6in}

In this section we consider the situation when $\lambda_\infty$ is a simple pole of $m$. Note that if $\lambda_\infty\in\Comp^+$, then it is necessarily a simple pole, due to Theorem \ref{rep} (i); moreover, there exists $n_0\in\Nat$ such that  $\lambda_n\in\Comp^+$ for $n>n_0$ (see Theorem \ref{N} below). 
If instead $\lambda_\infty$ is a simple real pole, we show that $\lambda_n$ is real for sufficiently large $n$.
We begin with a technical result, needed to prove our main theorem.

\begin{proposition}\label{3.2}
Let $m=m_{[0,\infty)}$ satisfy the (equivalent) conditions  (i), (ii), (iii) of Theorem \ref{rep}. If $\lambda_\infty\in\Comp\setminus[t_3,t_4]$ is a simple pole of $m$, then 
 $$
{\lambda_\infty-\lambda_n} =-\frac{b_0^2\left(m_{[1,\infty)}(\lambda_\infty)-m_{[1,n]}(\lambda_\infty)\right)}{1-b_0^2 m_{[1,\infty)}'(\lambda_\infty)}+\alpha_n,
$$
where  $\alpha_n$ is such that
$$
\frac{\alpha_n}{\sup_{x\in X} |m_{[1,\infty)}(x) - m_{[1,\infty)}(x)   |}\to 0, \qquad n\to \infty
$$
for any  disc  $X\sbs\Comp^+$ containing $\lambda_\infty$. If, additionally, $\lambda_\infty\in\Real$, then $\lambda_n\in\Real$ for sufficiently large $n$. 
\end{proposition}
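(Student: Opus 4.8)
The plan is to realise both $\lambda_\infty$ and $\lambda_n$ as zeros of the Schur‑complement denominators in \eqref{m01infty} and \eqref{m01}, and then to run a first‑order perturbation (Newton‑type) argument. Put $F_\infty(z):=z-a_0-b_0^2 m_{[1,\infty)}(z)$ and $F_n(z):=z-a_0-b_0^2 m_{[1,n]}(z)$, so that $m_{[0,\infty)}=1/F_\infty$ near $\lambda_\infty$ and $m_{[0,n]}=1/F_n$ near $\lambda_n$ (the latter for large $n$); hence $F_\infty(\lambda_\infty)=0$, $F_n(\lambda_n)=0$, and the hypothesis that $\lambda_\infty$ is a simple pole of $m$ is exactly $F_\infty'(\lambda_\infty)=1-b_0^2 m_{[1,\infty)}'(\lambda_\infty)\ne0$. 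Since the zeros of the polynomials orthogonal with respect to $\mu$ lie in the open interval $(t_3,t_4)$, the spectrum of every $H_{[1,n]}$ is contained in $(t_3,t_4)$, so one may fix a closed disc $X$ centred at $\lambda_\infty$ with $\overline X\cap[t_3,t_4]=\emptyset$ (and $\overline X\subset\Comp^+$ when $\lambda_\infty\in\Comp^+$). On such an $X$ the functions $m_{[1,n]}$, $m_{[1,\infty)}$ are holomorphic and uniformly bounded (by $\dist(X,[t_3,t_4])^{-1}$), and $m_{[1,n]}\to m_{[1,\infty)}$ uniformly on $X$ --- this is the classical convergence underlying Proposition~\ref{Conv} on $\Comp^+$, propagated to the rest of $X$ by Vitali's theorem. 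Shrinking $X$ so that $\lambda_\infty$ is the only zero of $F_\infty$ in $\overline X$, the uniform convergence $F_n\to F_\infty$ on $\overline X$ together with Rouch\'e's theorem gives, for all large $n$, exactly one zero of $F_n$ in $X$; by Corollary~\ref{lconvergence} that zero is $\lambda_n$ (a pole of $m_{[0,n]}$ lying in $X$ for large $n$), which is therefore a simple zero of $F_n$.

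Write $\delta_n:=m_{[1,n]}-m_{[1,\infty)}$ and $\|\delta_n\|_X:=\sup_{x\in X}|\delta_n(x)|\to0$. Subtracting, $F_n(\lambda_n)=0$ becomes $F_\infty(\lambda_n)=b_0^2\delta_n(\lambda_n)$. Taylor‑expanding $F_\infty$ about $\lambda_\infty$, i.e. $F_\infty(\lambda_n)=F_\infty'(\lambda_\infty)(\lambda_n-\lambda_\infty)+O(|\lambda_n-\lambda_\infty|^2)$, and using $\delta_n(\lambda_n)=\delta_n(\lambda_\infty)+O(\|\delta_n\|_X\,|\lambda_n-\lambda_\infty|)$ (the latter from Cauchy's estimate for $\delta_n'$ on a slightly smaller disc), one obtains
\[
F_\infty'(\lambda_\infty)\,(\lambda_n-\lambda_\infty)=b_0^2\,\delta_n(\lambda_\infty)+O\!\big(\|\delta_n\|_X\,|\lambda_n-\lambda_\infty|\big)+O\!\big(|\lambda_n-\lambda_\infty|^2\big).
\]
Dividing by $F_\infty'(\lambda_\infty)\ne0$ and substituting $\delta_n(\lambda_\infty)=m_{[1,n]}(\lambda_\infty)-m_{[1,\infty)}(\lambda_\infty)$ produces the claimed identity, with $\alpha_n$ the collected remainder $O(\|\delta_n\|_X|\lambda_n-\lambda_\infty|)+O(|\lambda_n-\lambda_\infty|^2)$. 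To obtain the stated order of $\alpha_n$ one bootstraps: for large $n$ both $\|\delta_n\|_X$ and $|\lambda_n-\lambda_\infty|$ are small, so the displayed identity forces $|\lambda_n-\lambda_\infty|\le C\|\delta_n\|_X$; feeding this back gives $|\alpha_n|=O(\|\delta_n\|_X^2)$, hence $\alpha_n/\|\delta_n\|_X=O(\|\delta_n\|_X)\to0$, as required.

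For the last assertion, assume in addition $\lambda_\infty\in\Real$. On the real segment $X\cap\Real$ the functions $F_\infty$ and each $F_n$ are real‑valued (the $m$‑functions being real on $\Real\setminus[t_3,t_4]$), $F_\infty$ is real‑analytic there with a simple (hence sign‑changing) zero at $\lambda_\infty$, and $F_n\to F_\infty$ uniformly on $X\cap\Real$; therefore, with $\varrho>0$ small enough that $[\lambda_\infty-\varrho,\lambda_\infty+\varrho]\subset X$, for large $n$ the values $F_n(\lambda_\infty-\varrho)$ and $F_n(\lambda_\infty+\varrho)$ have opposite signs, so $F_n$ has a real zero in $X$ by the intermediate value theorem. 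Since for large $n$ the function $F_n$ has only one zero in $X$ by the Rouch\'e count above, that zero is $\lambda_n$; hence $\lambda_n\in\Real$ for all sufficiently large $n$.

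The only genuine work lies in the quantitative middle step: one must separate the honestly first‑order term $b_0^2\delta_n(\lambda_\infty)/F_\infty'(\lambda_\infty)$ from the remainder and then bootstrap carefully, using Cauchy estimates to bound $\delta_n'$ and the second‑order Taylor remainder of $F_\infty$ by $\|\delta_n\|_X$ and $|\lambda_n-\lambda_\infty|$ on a slightly smaller disc, so that $\alpha_n$ is genuinely $o(\|\delta_n\|_X)$ rather than merely $O(\|\delta_n\|_X)$. The preliminary localisation of $\sigma(H_{[1,n]})$ inside $[t_3,t_4]$ and the upgrade of the $\Comp^+$‑convergence of the $m$‑functions to uniform convergence on a disc that may straddle $\Real$ are routine, but must precede the perturbation argument.
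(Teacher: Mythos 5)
Your proof is correct and reaches the same conclusion as the paper, but you implement the key step differently. The paper localises $\lambda_n$ as the unique zero of $m_n(z)=1/m_{[0,n]}(z)=z-a_0-b_0^2m_{[1,n]}(z)$ near $\lambda_\infty$ and then invokes the implicit function theorem in the Banach space $\mathcal{C}(X)$ applied to $F(m,x)=m(x)$, reading off the Fr\'echet derivative $-m(\lambda_\infty)/m_\infty'(\lambda_\infty)$ to get the linear term and an abstract $o(\|m_n-m_\infty\|)$ remainder; you instead do the linearisation by hand --- Taylor expansion of $F_\infty$ at its simple zero, a Cauchy estimate for $\delta_n'$, and a bootstrap. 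Your route is more elementary and in fact yields slightly more, namely the quantitative bound $\alpha_n=O(\|\delta_n\|_X^2)$ rather than mere $o(\|\delta_n\|_X)$; the paper's route is shorter once the IFT is accepted. For the reality of $\lambda_n$ you argue via the intermediate value theorem and the sign change of $F_\infty$ at its simple real zero, whereas the paper argues that a non-real $\lambda_n$ would force $\bar\lambda_n$ to be a second zero inside the Rouch\'e disc; both are valid and both hinge on the same Rouch\'e uniqueness. Two small points. First, your leading term comes out as $+b_0^2\bigl(m_{[1,\infty)}(\lambda_\infty)-m_{[1,n]}(\lambda_\infty)\bigr)/\bigl(1-b_0^2m_{[1,\infty)}'(\lambda_\infty)\bigr)$, opposite in sign to the displayed statement; substituting $m_\infty-m_n=-b_0^2(m_{[1,\infty)}-m_{[1,n]})$ into the paper's own final display gives the same $+$ sign, so the printed statement carries a sign typo and your computation agrees with the paper's actual derivation (only $|\lambda_\infty-\lambda_n|$ is used later, so nothing downstream is affected). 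Second, like the paper, you establish $\alpha_n=o(\|\delta_n\|_X)$ only for the particular small disc $X$ you fix; this suffices for every larger disc by monotonicity of the supremum and is all that Theorem \ref{main-geom} needs.
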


\begin{proof} 
Let $X$ be an open disc, such that $\lambda_\infty\in X\sbs \Comp^+\setminus[t_3,t_4]$. Note that  for sufficiently large $n$ the functions 
\begin{equation}\label{mn}
m_n(z):=\frac1{m_{[0,n]}(z)}=z-a_0-b_0^2 m_{[1,n]}(z),
\end{equation}
as well as
\begin{equation}\label{minfty}
 m_\infty(z)=\frac1{m_{[0,\infty)}(z)}=z-a_0-b_0^2 m_{[1,\infty)}(z)
\end{equation}
belong to $\mathcal{C}(X)$, the complex Banach space of  continuous functions on $X$ with the supremum norm. Indeed, for sufficiently large $n$ the function $m_{[1,n]}(z)$ has no poles in $X\cap\Real$.
Also observe that $m_n$ converges to  $m_{\infty}$ in $\mathcal{C}(X)$, since $m_{[1,n]}(z)$ converges to $m_{[1,\infty)}(z)$ locally uniformly on $\Comp\setminus[t_3,t_4]$.  Consider the mapping
$$
F:\mathcal{C}(X)\times X \ni (m,x) \mapsto m(x) \in \Comp.
$$
As $\lambda_\infty\in\Comp^+$ is a simple pole of $m_{[0,\infty)}$, one has 
 $$
m_\infty'(\lambda_\infty)=(-1/m_{[0,\infty)})'(\lambda_\infty)\neq 0.
$$ 
Therefore, 
$$
\frac{\partial F}{\partial x} (m_{\infty},\lambda_\infty)=m'_{\infty}(\lambda)\neq 0,
$$
and we can apply the implicit function theorem in Banach spaces to the mapping $F$ (see e.g. \cite{krantz}). As a result we obtain in a neighborhood $U\times Y$ of $ (m_{\infty},\lambda_\infty)$ a  differentiable function $\xi:U\to Y$ such that
$$
\set{(m,x)\in   U\times Y: m(x)=0}= \set{(m,\xi(m)): m\in U}=0.
$$
We may take $Y$ so small that  $Y\sbs\Comp\setminus[t_3,t_4]$ and that 
$m_\infty$ has no other zeros in $Y$ except $\lambda_\infty$.
 Note that for sufficiently large $n$ one has $m_{n}\in U$. Hence, on one hand we have that for sufficiently large $n$
$$
 m_n(\xi(m_n))=F(m_n,\xi(m_n))=0,\quad m_n(x)\neq0,\ x\in U\setminus\set{\xi(m_n)}.
$$
On the other hand, $\lambda_n$ converges to $\lambda_\infty$ and $m_n(\lambda_n)=0$. Consequently, 
 $\lambda_n=\xi(m_n)$ for $n$ large enough. 
 
Now note that 
$$
\frac{ \partial x}{\partial m}(m_\infty)m= - \frac{\frac{\partial F}{\partial m}(m_\infty,x(m_\infty))m}{\frac{\partial F}{\partial x}(m_\infty,x(m_\infty)) }=-\frac{m(\lambda_\infty)}{m_\infty'(\lambda_\infty)}.
$$  
Furthermore,
\begin{eqnarray*}
{\lambda_\infty-\lambda_n}& =&\frac{ \partial x}{\partial m}(m_\infty){(m_\infty-m_n)}+\alpha(m_\infty-m_n)\\
&=&-\frac{m_\infty(\lambda_\infty)-m_n(\lambda_\infty)}{m_\infty'(\lambda_\infty)}+\alpha(m_\infty-m_n),
\end{eqnarray*}
where $\alpha(h_n)/\norm{ h_n}_{\mathcal{C}(X)}\to 0$ with $\norm{h_n}_{\mathcal{C}(X)}\to 0$. Set $h_n=m_n-m_\infty$ and
$$
\alpha_n=\alpha(m_\infty-m_n)=\lambda_\infty-\lambda_n +\frac{m_\infty(\lambda_\infty)-m_n(\lambda_\infty)}{m_\infty'(\lambda_\infty)}
$$ 
and note that the right-hand side of the above does not depend on the initial choice of the disc $X$. 
This finishes the proof of the first statement.

Now let $\lambda_\infty\in\Real\setminus[{t_3},{t_4}]$. By the locally uniform convergence of  $m_n$ to $m_\infty$ and by the Rouch\'e theorem there is a small disc  $Z$ with the center in $\lambda_\infty$, such that each function $m_n(z)$ has precisely one zero $z_n$ in $Z$. As $\lambda_n$ converges to $\lambda_\infty$ we must have $\lambda_n=z_n$ for large $n$. 
Therefore, $\lambda_n\in\Real$, otherwise $\bar\lambda_n\in Z$ is another zero of $m_n$ in $Z$, which is a contradiction.
\end{proof}

\begin{remark}\label{expl}
{\rm We are able now to prove the main result of our paper, Theorem \ref{main-geom}. First, though, we would like to stress that there are two equivalent ways of seeing it according to the objects we consider:

A first interpretation takes as its main object the tridiagonal matrix, presented here in a block form
$$
H_{[0,\infty)}=\left(
\begin{array}{c|ccc}
  a_0 & -b_0 & 0  & \cdots  \\ \hline
  b_0 &  &  &  \\
    0  &   & H_{[1,\infty)} &  \\
     \vdots &      &       & \\
      \end{array}\right);
$$
$\lambda_\infty$ is then the (unique) eigenvalue of nonpositive type of $H_{[0,\infty)}$, 
$\lambda_n$ is the unique eigenvalue of nonpositive type of the finite truncation $H_{[0,n]}$ of $H_{[0,\infty)}$, and the spectrum of $H_{[1,\infty)}$ is contained, by assumption, in $[t_3,t_4]$. 

A second interpretation considers instead a meromorphic function $m(z)$ having the representations \eqref{n1model1} and \eqref{n1model2} and its $[n-1/n]$ Pad\'e approximants $m_{[0,n]}$. The point $\lambda_\infty$ ($\lambda_n$) is then the unique pole of nonpositive type of $m(z)$ ($m_{[0,n]}$, respectively). 

In both settings Theorem \ref{main-geom} gives the convergence rate of $\lambda_n$ to $\lambda_\infty$, in terms of the ``distance" of $\lambda_\infty$ from the interval $[t_3,t_4]$:
the rate of convergence is at least exponential $\mathcal{O}(q^{-2n})$, where
the number $q$ is such that $\lambda_\infty$ lies on the ellipse with foci at $t_3,t_4$ and sum of its semi-axes equal to $({t_4}-{t_3})q/2$. Consequently, as confirmed by our numerical tests below, the convergence rate gets worse the larger is the eccentricity of said ellipse, i.e.: the convergence slows down when $\lambda_\infty$ is ``close" to the interval $[t_3,t_4]$.}
\end{remark}

\begin{theorem}\label{main-geom}
Let $\lambda_\infty$, $\lambda_n$, $t_3$, $t_4$ be as in Theorem \ref{rep} and Remark \ref{expl} above.  
If $\lambda_\infty\in\Comp\setminus[t_3,t_4]$ is a simple eigenvalue, then 
$$
\limsup_{n\to\infty}|{\lambda_\infty-\lambda_n} |^{1/n}\leq \frac{1}{q^2},
$$
 where $q=g+\sqrt{g^2-1}$, and
\begin{equation}\label{g}
 g=\frac{|\lambda_\infty-t_4|+|\lambda_\infty-t_3|}{t_4-t_3}>1 
 \end{equation}
is the reciprocal of the eccentricity of the ellipse through $\lambda_\infty$ with foci at $t_3,t_4$.
If, additionally, $\lambda_\infty\in\Real$, then $\lambda_n\in\Real$ for sufficiently large $n$. 
\end{theorem}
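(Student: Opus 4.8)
\noindent\emph{Strategy of the proof.} The plan is to reduce the statement, via Proposition~\ref{3.2}, to a rate estimate for the Markov-type approximation $m_{[1,n]}\to m_{[1,\infty)}$, and then to obtain that rate from the Gauss-quadrature structure of $m_{[1,n]}$ together with the classical Bernstein-ellipse estimate for polynomial approximation of the Cauchy kernel $(t-z)^{-1}$. If $\lambda_\infty\in\Comp^+$, Proposition~\ref{3.2} gives
$$\lambda_\infty-\lambda_n=-\,\frac{b_0^2\bigl(m_{[1,\infty)}(\lambda_\infty)-m_{[1,n]}(\lambda_\infty)\bigr)}{1-b_0^2\,m_{[1,\infty)}'(\lambda_\infty)}+\alpha_n ,$$
where the denominator is a fixed nonzero constant (simplicity of the pole) and $\alpha_n/\sup_{x\in X}|m_{[1,\infty)}(x)-m_{[1,n]}(x)|\to0$ for every disc $X\sbs\Comp^+\setminus[t_3,t_4]$ containing $\lambda_\infty$. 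Hence, for every such $X$,
$$\limsup_{n\to\infty}|\lambda_\infty-\lambda_n|^{1/n}\ \le\ \limsup_{n\to\infty}\Bigl(\ \sup_{x\in X}\bigl|m_{[1,\infty)}(x)-m_{[1,n]}(x)\bigr|\ \Bigr)^{1/n}.$$
If instead $\lambda_\infty\in\Real\setminus[t_3,t_4]$, the same conclusion follows by applying the standard perturbation-of-a-simple-zero argument to $f_n:=1/m_{[0,n]}=z-a_0-b_0^2 m_{[1,n]}$ on a real neighbourhood of $\lambda_\infty$: there $f_n\to f_\infty:=1/m_{[0,\infty)}$ in $C^1$, $f_\infty(\lambda_\infty)=0$, $f_\infty'(\lambda_\infty)\neq0$, and the zero $\lambda_n$ of $f_n$ converges to $\lambda_\infty$ and is real for large $n$ by Proposition~\ref{3.2}, so $|\lambda_\infty-\lambda_n|\le C\,|m_{[1,\infty)}(\lambda_\infty)-m_{[1,n]}(\lambda_\infty)|$ for large $n$. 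In all cases it thus suffices to estimate $\sup_{x\in X}|m_{[1,\infty)}(x)-m_{[1,n]}(x)|$ for discs $X$ shrinking to $\lambda_\infty$.

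\noindent For this I would use that $m_{[1,n]}$ is the $n$-point Gauss quadrature approximation of the Markov function $m_{[1,\infty)}(z)=\int_{t_3}^{t_4}d\mu(t)/(t-z)$, with $\mu$ the probability measure of Theorem~\ref{rep}(ii) (equivalently, $m_{[1,n]}$ is the $[n-1/n]$ Pad\'e approximant of $m_{[1,\infty)}$). Indeed, diagonalizing the finite Jacobi matrix $H_{[1,n]}$, whose recurrence coefficients are those of $\mu$, gives $m_{[1,n]}(z)=\int_{t_3}^{t_4}d\mu_n(t)/(t-z)$, where $\mu_n$ is the $n$-node Gauss measure for $\mu$: its nodes are the eigenvalues of $H_{[1,n]}$, hence lie in $[t_3,t_4]$, and its weights are positive with total mass $1$. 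Since $n$-node Gauss quadrature is exact on polynomials of degree $\le 2n-1$, for every $z\notin[t_3,t_4]$ and every polynomial $p$ of degree $\le 2n-1$,
$$m_{[1,\infty)}(z)-m_{[1,n]}(z)=\int_{t_3}^{t_4}\Bigl(\tfrac{1}{t-z}-p(t)\Bigr)\,d(\mu-\mu_n)(t),$$
so that $\bigl|m_{[1,\infty)}(z)-m_{[1,n]}(z)\bigr|\le 2\,E_{2n-1}\!\bigl(\tfrac{1}{\,\cdot-z\,};[t_3,t_4]\bigr)$, where $E_k(h;[t_3,t_4])$ is the best uniform approximation error of $h$ on $[t_3,t_4]$ by polynomials of degree $\le k$.

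\noindent Finally I would invoke the classical Bernstein estimate for functions holomorphic near $[t_3,t_4]$: for fixed $z\notin[t_3,t_4]$ the function $t\mapsto(t-z)^{-1}$ is holomorphic on $\Comp\setminus\{z\}$, and $z$ lies exactly on the Bernstein ellipse with foci $t_3,t_4$ and sum of semi-axes $(t_4-t_3)q(z)/2$ (the Joukowski identity recalled in Remark~\ref{expl}, with $q(z)$ as in \eqref{g}); hence $\limsup_k E_k(\tfrac{1}{\,\cdot-z\,};[t_3,t_4])^{1/k}\le q(z)^{-1}$, uniformly for $z$ in compact subsets of $\Comp\setminus[t_3,t_4]$ (with $q(z)$ replaced by $\min q$ over the set). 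Combining this with the previous display gives
$$\limsup_{n\to\infty}\Bigl(\sup_{x\in X}\bigl|m_{[1,\infty)}(x)-m_{[1,n]}(x)\bigr|\Bigr)^{1/n}\le\Bigl(\min_{x\in X}q(x)\Bigr)^{-2},$$
and letting $X$ shrink to $\lambda_\infty$ (using continuity of $g$, hence of $q$, on $\Comp\setminus[t_3,t_4]$) yields $\limsup_n|\lambda_\infty-\lambda_n|^{1/n}\le q(\lambda_\infty)^{-2}$; the last assertion is precisely the corresponding statement of Proposition~\ref{3.2}. I expect the only delicate points to be bookkeeping ones — carrying the estimate uniformly over the shrinking disc $X$ so that the lower-order term $\alpha_n$ is absorbed, and the real-$\lambda_\infty$ variant of the linearization — while the conceptual reason the exponent is $2n$ and not $n$ is exactly that $n$-node Gauss quadrature is exact up to degree $2n-1$, i.e. twice the order one would obtain from a plain $n$-term truncation.
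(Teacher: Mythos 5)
Your proposal is correct, and its top-level structure coincides with the paper's: both proofs reduce the problem, via Proposition~\ref{3.2} (absorbing the $\alpha_n$ term into a $\sup_{x\in X}|m_{[1,n]}(x)-m_{[1,\infty)}(x)|$ bound over a small disc $X$ around $\lambda_\infty$ and then shrinking $X$), to the rate of convergence of $m_{[1,n]}$ to the Markov function $m_{[1,\infty)}$ outside the Bernstein ellipse through $\lambda_\infty$. Where you differ is in how that rate is obtained. The paper simply cites Theorem~(2.6.2) of Nikishin--Sorokin \cite{niso}, which gives $\limsup_n\sup_{z\in\Comp\setminus L_R}|m_{[1,n]}(z)-m_{[1,\infty)}(z)|^{1/n}\le R^{-2}$ directly, and then translates the condition $\lambda_\infty\notin L_R$ into $R<g+\sqrt{g^2-1}$. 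You instead derive this estimate from scratch: $m_{[1,n]}$ is the $n$-node Gauss quadrature of $\mu$ (nodes in $[t_3,t_4]$, positive weights of total mass $1$), exactness up to degree $2n-1$ gives $|m_{[1,\infty)}(z)-m_{[1,n]}(z)|\le 2E_{2n-1}\bigl(\tfrac{1}{\cdot-z};[t_3,t_4]\bigr)$, and the Bernstein/Chebyshev-expansion bound for the Cauchy kernel yields $E_k^{1/k}\to q(z)^{-1}$ locally uniformly --- which is exactly where the exponent $2$ in $q^{-2n}$ comes from. Your route is self-contained and more transparent about the mechanism (quadrature exactness of order $2n$), at the cost of a few bookkeeping steps (total-variation bound, uniformity of the Bernstein estimate on compacts) that the citation hides; the paper's route is shorter but opaque on why the rate is $R^{-2}$ rather than $R^{-1}$. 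Your separate treatment of the real case is harmless but not strictly needed, since Proposition~\ref{3.2} as stated already covers $\lambda_\infty\in\Real\setminus[t_3,t_4]$ and supplies the final assertion that $\lambda_n\in\Real$ for large $n$.
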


\begin{proof}
For $R>1$ let $L_R$ denote the closed set bounded by the ellipse with foci at ${t_3},{t_4}$ and the sum of its semi-axes equal to $({t_4}-{t_3})R/2$. 
Due to Theorem (2.6.2) in Ref. 
 \cite{niso}, one has
\begin{equation}\label{elipse}
 \limsup_{n\to\infty}\sup_{z\in \Comp \setminus  L_R} |m_{[1,n]}(z) - m_{[1,\infty)}(z) |^{1/n} \leq \frac 1{R^2}.
\end{equation}
Note that 
\begin{equation}\label{LR}
\lambda_\infty\notin L_R \iff  R< g +\sqrt{g^2-1}.
\end{equation}
Take any $R\in(1,g+\sqrt{g^2-1})$ and a small disc $X$, such that $\lambda_\infty\in X\sbs \Comp\setminus L_R$. From Proposition \ref{3.2} we obtain that
\begin{eqnarray*}
|{\lambda_\infty-\lambda_n} | &\leq & C_1  |m_{[1,n]}(\lambda_\infty) - m_{[1,\infty)}(\lambda_\infty) | +\alpha_n\\
&\leq& C_2 \sup_{z\in X}  |m_{[1,n]}(z) - m_{[1,\infty)}(z) |,
\end{eqnarray*}
where $C_1,C_2$ are constants, dependent on $H_{[0,\infty)}$ and $X$ only. As a consequence, 
 $$
\limsup_{n\to\infty}|{\lambda_\infty-\lambda_n} |^{1/n}\leq \frac{1}{R^2}.
$$
Letting $R\to g +\sqrt{g^2-1}$ finishes the proof.
\end{proof}

Note that Theorem \ref{main-geom} cannot be easily generalized to the case when $\lambda_\infty\in[t_3,t_4]\setminus\supp \mu$. For example, if the support of the measure $\mu$ consists of two disjoint intervals $[t_3,t_5]\cup[t_6,t_4]$ the estimate \eqref{elipse}, which was the key point in proving Theorem \ref{main-geom}, still holds only outside the ellipse with foci at $t_3,t_4$, the reason being that the union of the poles of the Pad\'e approximants of  $\int_{t_3}^{t_4} (t-z)^{-1}\mu(dt)$  may be dense in $[t_5,t_6]$,  see for instance Ref. \cite{Suetin02}. 

\subsection{Examples}
\hspace{2.6in}

In our examples we want to be able to choose the position of $\lambda_\infty$; it is therefore convenient to consider cases where it is possible to calculate it without resorting at first to truncated matrices. One way to is look for matrices $H_{[1,\infty)}$ such that 
the the corresponding functions $m_{[1,\infty)}$ have a closed, analytic form. This will allow us to use (numerical) root finding methods to calculate $\lambda_\infty$ solving the equation 
\begin{equation}\label{mathe}
 z-a_0-b_0^2\ m_{[1,\infty)}(z)=0.
\end{equation}
Remembering that
\begin{equation}\label{H-m}
m_{[1,\infty)}=\int_{t_3}^{t_4} \frac{d\mu(t)}{t-z} ,
\end{equation}
this reduces to finding a suitable measure $\mu(t)$ with finite support. 

The choice
\begin{equation}\label{H-m1}
d\mu=d\sigma_{\alpha,\beta}(t)=\chi_{[-1,1]}(t)\cdot \frac{(1-t)^\alpha(1+t)^\beta dt}{\int_{-1}^1  (1-s)^\alpha(1+s)^\beta ds  },
\end{equation}
where $\chi_{[-1,1]}(t)$ is the characteristic function of the interval $[-1,1]$, gives us the matrix $H_{[1,\infty)}$ corresponding to Jacobi polynomials with parameters $\alpha,\beta$. To construct $H_{[1,\infty)}$ we consider the Jacobi orthogonal polynomials, which form an orthogonal basis in $L^2(\sigma_{\alpha,\beta})$ and the multiplication operator $p\mapsto xp$ in $L^2(\sigma_{\alpha,\beta})$. The three term recurrence relation (4.5.1) and the normalization factors (4.3.3) in Ref. \cite{Szego} provide the tridiagonal representation  $H_{[1,\infty)}$ of the multiplication operator. The spectral theorem for selfadjoint operators guarantees that \eqref{H-m} with \eqref{H-m1} is satisfied.

The choice 
\begin{equation}\label{H-m2}
d\mu=\chi_{[-2,2]}(t)\cdot \frac{\sqrt{4-t^2}dt}{2\pi}
\end{equation}
instead gives the matrix $H_{[1,\infty)}$ with $a_j=0$, $b_j=1$, $j=1,2,\dots$ corresponding to the orthogonal polynomials associated with the Wigner semicircle measure.


To calculate $\lambda_n$ we use Matlab \cite{matlab}: first, we calculate all eigenvalues and eigenvectors of $H_{[0,n]}$; we then find $\lambda_n$ as the only eigenvalue of nonpositive type of $H_{[0,n]}$,
 i.e. the only eigenvalue which is either in the upper half-plane or is real with the corresponding eigenvector ${\bf x}$ satisfying 
$$
-|x_0|^2+\sum_{j=1}^n |x_j|^2\leq 0.
$$

A summary of the relevant results for all our examples can be found in Table \ref{Table}. Only graphs useful to our discussion are shown here; for the remaining cases quoted, pictures can be found in the supplementary files \cite{S}: the file names there refer to those given in Table \ref{Table}.

\begin{example}\label{jumping}
{\rm As our first example we take $\alpha=\beta=0$ in eq. \eqref{H-m1} so that
\begin{equation}\label{ana}
 m_{[1,\infty)}=\frac12\int_{-1}^1\frac{dt}{t-z}=\frac12(\log(1-z)-\log(-1-z)),
\end{equation}
  where the branch of the logarithm is chosen in such way that the above function is a Nevanlinna function. 
In this case $H_{[1,\infty)}$ corresponds, in the way described in the remarks above,  to the Legendre polynomials. 
We now vary the only remaining free parameters $a_0$ and  $b_0$ of $H_{[0,\infty)}$. Note in any case, due to Remark \ref{gaps}, we have $\lambda_\infty\notin[t_3,t_4]=[-1,1]$.
 
Let us start with $a_0=0.5$, $b_0=0.05$. It is immediately evident from  Figure \ref{Legendre_compl} 
that  the points $\lambda_n$ arrange themselves on three branches which spiral into $\lambda_\infty$. The point $\lambda_n$ jumps in a regular fashion from one branch to another: all $\lambda_n$'s with $n \mod 3=\const$ fall on the same branch. The two branches starting on the real axis leave it at $n=45$ and $n=190$ respectively, as can be seen from the plot of the imaginary part of $\lambda_n$ in Figure \ref{Legendre_ReIM}; the plots of the corresponding real parts have each a cusp at the same time, due to the inversion of the direction of motion of $\lambda_n$. The plot in Figure \ref{Legendre_conv} shows exponential convergence of $\lambda_n$ to $\lambda_\infty$ setting in soon after all three branches leave the real line.
The black reference line with slope $-2\log(q)$ represents the bound from Theorem \ref{main-geom}, the intercept is chosen so that the line is superimposed on the numerical data. 
In this example (as well as in subsequent examples with measures having no gaps) we see that  the estimate of the convergence rate in Theorem \ref{main-geom} is  sharp and consequently can not be improved in general. On the other hand,  the estimate is not always sharp, cf. Example \ref{jumping3}.

We have already mentioned that the points $\lambda_n$ arrange themselves regularly on branches. This behavior is common to most of our examples, except the cases when $\lambda_\infty$ is on the real axis, where because of Proposition \ref{3.2} there is  one branch only.    Since  the branches appear to approach $\lambda_\infty$ isotropically, as can be e.g. seen in the zoomed picture  of Figure \ref{Legendre_compl}, this suggests a convenient way of calculating the numerical value of $\lambda_\infty$ as the mean 
\begin{equation}\label{mean}
\lambda_\infty^{(N)}=k^{-1}\sum_{n=N-k+1}^N \lambda_n, 
\end{equation}
where $k$ denotes the number of branches and $N$ the last $n$ for which $\lambda_n$ is calculated.

The value of $\lambda_\infty$ obtained taking the average of the three last points (one for each branch) equals 
$0.4999 + 0.0039\ii$, which agrees well with the value $0.498631+0.00391397\ii$ obtained solving with Mathematica \cite{mathematica} eq. \eqref{mathe}.

We conclude noting that --here and in all our other examples-- the real axis forms a barrier for $\lambda_n$: it never crosses it and branches which touch it get stuck in it; this is clearly visible in the movie {\sf Legendre\_spirals\_a\_0.5.avi} that can be found among the supplementary files to this paper \cite{S}. This behavior is related to the symmetry of the spectrum with respect to the real axis. 
 
If we now keep $a_0=0.5$ constant and vary $b_0$ to assume the values $b_0=0.5, 0.1$, and $0.01$, in all cases the $\lambda_n$'s arrange themselves over three spiraling branches and the value $\lambda_\infty$ obtained from the average of the last three values of $\lambda_n$ agrees with the one calculated solving eq. \eqref{mathe} with Mathematica to the last digit shown in Table  \ref{Table}. In all cases convergence is exponential and the slope of logarithmic plots similar to that in Figure \ref{Legendre_conv} are in agreement with the value $-2\log(q)$ from Theorem \ref{main-geom} (see Table  \ref{Table}).  There are some case to case differences, but they do not affect the picture given above: for $b_0=0.5$ the three branches are not clearly visible, due to the very fast convergence of $\lambda_n$; for $b_0=0.1$ only one branch spends some time on the real axis, up to $n=46$; and for $b=0.01$ even after 1000 iterations one of the three branches has not yet left the real axis (for figures see the supplementary material \cite{S1}). 
 
%
%

Our last example with measure eq. \eqref{ana} is a case  when $\lambda_\infty\in\Real\setminus [t_3,t_4]$: if we take $a_0=1.001$ and $b_0=0.001$ we get $\lambda_\infty\simeq 1.001$ (Mathematica has problems solving eq. \eqref{mathe} in this case). The values of $\lambda_n$ are all real; we therefore have a single branch. Convergence is again exponential (for figures see the supplementary material \cite{S2}). It is instructive to compare this case to the case $a_0=0.5$,  $b=0.05$: the distance of $\lambda_\infty$ from the interval $[-1,1]$ is of the same order, but the convergence is much faster in the present case. This is due to the different eccentricities of the ellipses from Theorem  \ref{main-geom}: in the present case the eccentricity is smaller, and therefore $q$ is larger and in consequence the convergence rate is better.  

Finally, in Figure \ref{Ns} we summarize the convergence behavior when submatrix $H_{[1,\infty)}$ corresponds to measure eq. \eqref{ana}: we  vary the parameters $a_0\in(-1,1)$ and $b_0\in(0.05,0.5)$ and for each pair $(a_0,b_0)$ we compute the value $n_0$ where the last $\lambda_n$ branch leaves the real axis. Each pair $(a_0,b_0)$ determines a single point $\lambda_\infty$ in the upper half-plane, which we plot color coded according to $n_0$. The figure appears to have a fractal character for which we do not yet have an explanation but which we suspect to be related to the way the number of $\lambda_n$ branches varies with varying $a_0$. This can be seen in the movie {\sf Legendre\_spirals\_b\_0.01.avi} \cite{S} where we keep $b_0=0.01$ and vary $a_0$. We instead observe no change in the number of branches when varying $b_0$ at constant $a_0$ (see e.g. the movie {\sf Legendre\_spirals\_a\_0.5.avi}  \cite{S}).}
 
\end{example}

As we have already mentioned, branches are a common occurrence, not limited to the example just given. We give here a couple more examples.


\begin{example}\label{jumping2}
{\rm We now take $H_{[1,\infty)}$ corresponding to the orthogonal polynomials associated with the Wigner semicircle measure, i.e. the measure given by eq. \eqref{H-m2} and we choose $a_0=0.5$, $b_0=0.1$ as the remaining parameters for $H_{[0,\infty)}$. 
Note  due to Remark \ref{gaps} we again have $\lambda_\infty\notin[t_3,t_4]=[-2,2]$.

In Figure \ref{Wigner_compl} it is possible to see that the $\lambda_n$'s form twelve branches. Knowing this, we can use the recipe given above to calculate $\lambda_\infty$ as the mean eq. \eqref{mean} of the last twelve values of $\lambda_n$; the result agrees to the last digit shown in Table \ref{Table} with the value obtained solving eq. \eqref{mathe} with Mathematica.

Figure \ref{Wigner_conv} shows exponential convergence of $\lambda_n$ to $\lambda_\infty$ with the rate predicted by Theorem \ref{main-geom}; other plots concerning this example can be found as supplementary files \cite{S3}. 
 
The video {\sf Wigner\_spirals\_b\_0.01.avi} in \cite{S} shows the evolution of the $\lambda_n$ branches under the change of the parameter $a_0$. As in Example \ref{jumping}, it is evident that here too the number of branches changes with $a_0$. Looking at the first few frames of the movie it is also evident that there are branches that start off the real axis, hit it and --instead of continuing into $\Comp^-$-- get trapped in it moving horizontally for a number of $n$, and then leave it when the spiral reenters $\Comp^+$.}
\end{example}

\begin{example}\label{jumping3}
{\rm Here we present an example of a different nature. The matrix $H_{[1,\infty)}$ is constructed in such way, that its spectrum is a totally disconnected, Cantor-like set, see \cite{cantor} for details; the other parameters are $a_0=0.5$ and $b_0=0.1$. In this particular case we could count $27$  branches of $\lambda_n$; such a high number is hardly visible when plotting $\lambda_n$ in the complex plane but can be seen in the plot of $\log|\lambda_n-\lambda_\infty|$ in Figure \ref{Cantor_conv}: the $\lambda_n$'s form regular clusters of $27$ points.

What is particularly noteworthy is that --contrary to the previous examples-- the convergence rate is faster than the one predicted by Theorem \ref{main-geom}, as can be seen comparing the theoretical bound (black line) with the numerical points in Figure \ref{Cantor_conv}. This is probably due to the fact that the spectrum of  $H_{[1,\infty)}$ contains gaps.
Other pictures corresponding to this example (called {\sf Cantor\_a\_0.5\_b\_0.1\_xxx.eps}), as well as the video {\sf Cantor\_spirals\_b\_0.2.avi} showing the evolution of branches under the change of the parameter $a_0$, can be found in the supplementary files \cite{S}: again the number of branches changes with $a_0$; moreover intervals in $a_0$ where both $\lambda_\infty$ and all the $\lambda_n$ become real are evident and correspond to the gaps in $H_{[1,\infty)}$, see Remark \ref{gaps}.}
\end{example}

Although Theorem \ref{main-geom} proves the exponential rate of  convergence of $\lambda_n$ to $\lambda_\infty$, it does not say when this convergence starts manifesting itself: at least in theory we could have $|{\lambda_\infty-\lambda_n} |> q^{-2n}$ for some $n$. Our numerical tests indicate that the convergence is somehow ``better" than the above bound, as the asymptotic behavior is $K \cdot q^{-2n}$ with $K<1$. Even the curious initial behavior connected with the real axis we observed in several cases above, does not bring $|{\lambda_\infty-\lambda_n}|$ to exceed $q^{-2n}$. On the other hand it would be convenient to have an estimate of $n_0$ such that for $n>n_0$ the point $\lambda_n$ is sure to be outside the support $[t_3,t_4]$ of the measure $\mu$: if one or more of the $\lambda_n$'s branches is still on $[t_3,t_4]$, our justification for estimating $\lambda_\infty$ by the mean eq. \eqref{mean} is compromised. We give an upper bound for $n_0$ in Appendix \ref{app:uno}.


%
%


\subsection{$\lambda_\infty\in\Real$ is embedded in the spectrum of the representing measure}\label{realcase}
\hspace{2.6in}

We shall limit our investigation of the case when $\lambda_\infty\in[t_3,t_4]$ to an example where  $\lambda_\infty=t_3$, to show that convergence in this case is in general worse than exponential. 
To build our example, we start by recalling Remark \ref{gaps}.  The only example known to us of classical orthogonal polynomials satisfying condition \eqref{JL85} with $\lambda_\infty\in[t_3,t_4]$ are the Jacobi polynomials with parameters $\alpha\geq 2$ or $\beta\geq 2$ and with $\lambda_\infty=t_3$ or $\lambda_\infty=t_4$, respectively.

\begin{example}\label{realev}
{\rm We take $H_{[1,\infty)}$ such that
$$
m_{[1,\infty)}=\int _{-1}^1 \frac{\frac34(1+t)^2(1-t) dt}{t-z},
$$
i.e. we take $\alpha=2, \beta=1$ in eq. \eqref{H-m1}. We then choose $a_0=-5/3$, $b_0=\sqrt{2/3}$, so that $\lambda_\infty=-1$. 
This can be seen e.g. by  analyzing the matrix $H_{[0,\infty)}$ itself:
it results from \cite{JL85}  conditions (i) and (ii) that $-1$ is an algebraically simple eigenvalue of $H_{[0,\infty)}$ with the corresponding eigenvector $x$ satisfying
$
-|x_0|^2+\sum_{j=1}^\infty |x_j|^2=0,
$
i.e. $\lambda_\infty=-1$ is the unique eigenvalue of nonpositive type of $H_{[0,\infty)}$.

Furthermore, due to Theorem 2.2 ($p_{1s}$) of \cite{JL85},  $-1$ is a singular critical, algebraically simple eigenvalue (see \cite{JL85} for a classification of eigenvalues of nonpositive type).     
 
The log-log plot of $|\lambda_n-\lambda|$ Vs. $n$  in Figure \ref{Jacobicrit_conv} shows clearly that convergence in this case is only polynomial: $|\lambda_n-\lambda|\simeq n^{-2}$, as can be seen comparing the numerical data with the black reference line whose slope is $-2$. The plot of the real and imaginary part of $\lambda_n$ can be found in \cite{S} as the file {\sf Jacobi12crit\_ReIm.eps}.} 

\end{example}

\section{Concluding remarks}\label{Conclusions}

As already stated above, the main question we tried to answer here is: 
``Suppose we are only able to compute the spectrum of finite truncations of a pseudo-Hermitian tridiagonal matrix $H_{[0,\infty)}$; can we say something about
the spectrum of the full operator $H_{[0,\infty)}$? Most interestingly, can we predict if the spectrum of $H_{[0,\infty)}$ is real or not?"

Suppose we have performed $N$ iterations, increasing step by step the size of the truncated matrix $H_{[0,n]}$; one of the next four cases applies.

\begin{itemize}
\item $\lambda_n$ is real for all $n=1\dts N$, is either the minumum or the maximum of the spectrum of $H_{[0,n]}$ and is separated from the other eigenvalues. Then the limit point $\lambda_\infty$ is also a real single eigenvalue, separated from other eigenvalues of $H_{[0,\infty)}$. 
\item $\lambda_n$ is complex for all $n$ larger than a $n_0<N$. Then the limit point $\lambda_\infty$ is also a complex single eigenvalue. $\lambda_\infty$ itself can be evaluated by finding the number of $\lambda_n$ branches and then using eq. \eqref{mean}. 
\item $\lambda_n$  oscillates between the real line and the complex plane up to $n=N$, a common occurrence when $\lambda_\infty$ is very close to the support of the measure $\mu$.  
Then the situation is in principle unclear: the limit eigenvalue $\lambda_\infty$ might be a complex point, a real critical point, or a real point in a relatively small gap of the spectrum. Still, if the $\lambda_n$ branches can be found and not to many of them are still trapped on $\Real$ for $n\simeq N$, it is still possible to give a numerical evaluation $\lambda_\infty^{(N)}$ of $\lambda_\infty$ by a careful use of eq. \eqref{mean}. 
If the plot of $\log|\lambda_n-\lambda_\infty^{(N)}|$ is then approximately a straight line, this is another indication for $\lambda_\infty$ being a simple eigenvalue out of the support of $\mu$.
\item The sequence $\lambda_n$ converges to a point $\lambda_\infty\in\Real$, but the convergence is not exponential, which again can be seen by the study of the plot of  $\log|\lambda_n-\lambda_\infty|$. Then $\lambda_\infty$ is a critical point on the real line, embedded in the support of $\mu$. In view of the second equation of  \eqref{JL85},  this case seems to be non-generic, i.e. a small change of the entries of the matrix will lead to a different case. However,  we were able to clearly observe this case in a numerical simulation, which in our opinion is an argument for considering this possibility as well.
\end{itemize}



While, for sake of simplicity, we restricted ourselves to the case of matrices with a single eigenvalue of nonpositive type, we believe that the results derived above can be generalized to a wider  class of  operators, at least to those considered in Ref. \cite{DD07}.

In the context of random matrices \cite{Wojtylak12b} or Nevanlinna functions, our research can be viewed as concerning the problem of predicting whether or not $\lambda_\infty\in [t_3,t_4]$ by calculating a finite number of Pad\'e approximants.
A connection can also be found to Pad\'e approximation of the Z-transform, considered in \cite{BessisPerotti09}, where the real line is replaced by the unit circle.

\section*{Acknowledgments}

Micha\l{} Wojtylak gratefully acknowledges the financial assistance of the Alexander von Humboldt Foundation with a Research Grant for Experienced Scientists, carried out at TU Berlin and with a Return Home Scholarschip, carried out at Jagiellonian University, Krak\'ow.  
Maxim Derevyagin gratefully acknowledges the financial assistance of the European Research Council under the European Union Seventh Framework Programme (FP7/2007-2013)/ERC, grant agreement no. 259173.  Maxim Derevyagin and Micha\l{} Wojtylak are indebted to Professor Olga Holtz for her encouragement and  support.

\appendix
\section{Estimate of the maximum number of real $\lambda_n$, when $\lambda_\infty\in\Comp^+$}
\label{app:uno}

\begin{theorem}\label{N} Let $m=m_{[0,\infty)}$ be of the forms \eqref{n1model1} and \eqref{n1model2} with $\lambda_\infty\in\Comp^+$. 
Then for any $\eps\in(0,1)$ for
\begin{equation}\label{bigN}
n>N_{\eps}:=\frac{\log\left( \frac{8b_0^2g\max_{z\in g L_R} |m_{[0,\infty)}(z)|}{(g-1)^2(t_4-t_3)}\right)}{\log\left( 1+\eps\sqrt{1-g^{-2}} \right)}
\end{equation}
one has 
$$
\lambda_n\in\Comp\setminus L_{R(\eps)},
$$
where g is again given by eq. \eqref{g}, $R(\eps)=g+\eps \sqrt{g^2-1}$ and $L_{R(\eps)}$ again denotes the closed set bounded by the ellipse with foci at ${t_3},{t_4}$ and the sum of its semi-axes equal to $({t_4}-{t_3})R(\eps)/2$. In particular, if $n>N_\eps$ for some $\eps\in(0,1)$ then $\lambda_n\notin[t_3,t_4]$.
\end{theorem}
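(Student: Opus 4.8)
The plan is to make the convergence $\lambda_n\to\lambda_\infty$ of Corollary~\ref{lconvergence} effective, by a Rouch\'e argument localized at $\lambda_\infty$ and powered by a non-asymptotic version of the approximation estimate \eqref{elipse}. Note that Proposition~\ref{3.2} by itself is not quite sufficient: the remainder $\alpha_n$ appearing there is only controlled asymptotically, whereas here we need an explicit threshold on $n$.

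First I would fix the geometry. Put $R=R(\eps)=g+\eps\sqrt{g^2-1}$ and $q=g+\sqrt{g^2-1}$. By \eqref{LR}, $\lambda_\infty$ lies on $\partial L_q$, whereas $[t_3,t_4]\subset L_g\subset L_R\subsetneq L_q$ since $1<g<R<q$. Hence $\delta:=\dist(\lambda_\infty,L_R)>0$; parametrizing the confocal ellipses $\partial L_\rho$ by $z=\tfrac{t_3+t_4}{2}+\tfrac{t_4-t_3}{4}\bigl(\rho e^{i\theta}+\rho^{-1}e^{-i\theta}\bigr)$, a short estimate gives a lower bound for $\delta$ explicit in $t_4-t_3$, $g-1$ and $q-R=(1-\eps)\sqrt{g^2-1}$, which is the origin of the factors $g$, $(g-1)^2$ and $t_4-t_3$ in \eqref{bigN}. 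Since $[t_3,t_4]$ and all the (real) poles of $m_{[1,n]}$ lie in the interior of $L_R$, after possibly shrinking $\delta$ so that $\overline{D}(\lambda_\infty,\delta)\subset\Comp^+$ and $\overline{\lambda}_\infty\notin\overline{D}(\lambda_\infty,\delta)$, both
$$
m_n(z)=z-a_0-b_0^2 m_{[1,n]}(z),\qquad m_\infty(z)=z-a_0-b_0^2 m_{[1,\infty)}(z)=\frac{1}{m_{[0,\infty)}(z)}
$$
are holomorphic on $\overline{D}(\lambda_\infty,\delta)$, and $m_\infty$ has there exactly one zero, the simple zero $\lambda_\infty$ (simple because $\lambda_\infty\in\Comp^+$ forces the form \eqref{GenDes2}, cf.\ the discussion before Proposition~\ref{3.2}).

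The core is a Rouch\'e comparison of $m_n$ and $m_\infty$ on $C:=\partial D(\lambda_\infty,\delta)$. On $C$ one has $|m_n-m_\infty|=b_0^2\,|m_{[1,n]}-m_{[1,\infty)}|$; since $m_{[1,n]}-m_{[1,\infty)}$ is holomorphic on $\Comp\cup\{\infty\}\setminus L_g$ (a region containing $C$) and vanishes at $\infty$, the maximum principle together with the quantitative form of Theorem~2.6.2 of \cite{niso} bounds $\sup_C|m_{[1,n]}-m_{[1,\infty)}|$ by $c_\mu\,g^{-2n}$, with $c_\mu$ depending only on the total mass of $\mu$ (namely $1$) and on $\dist(C,[t_3,t_4])$. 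In the other direction, writing $m_\infty(z)=(z-\lambda_\infty)\bigl(r+O(z-\lambda_\infty)\bigr)^{-1}$ near $\lambda_\infty$ with $r=\operatorname{res}_{\lambda_\infty}m_{[0,\infty)}$, a Cauchy estimate bounds $|r|$ by the maximum of $|m_{[0,\infty)}|$ over the contour appearing in \eqref{bigN}, so that $\inf_C|m_\infty|\ge c\,\delta/\max|m_{[0,\infty)}|$ for an absolute constant $c$. Imposing the Rouch\'e inequality $\sup_C|m_n-m_\infty|<\inf_C|m_\infty|$ and solving for $n$ produces precisely a threshold of the shape \eqref{bigN}. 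For $n$ past the threshold, $m_n$ and $m_\infty$ have the same number of zeros in $D(\lambda_\infty,\delta)$, namely one; since $D(\lambda_\infty,\delta)\subset\Comp^+$, that zero is a non-real eigenvalue of $H_{[0,n]}$, hence of nonpositive type, hence it equals $\lambda_n$ (by Section~\ref{Prel}, $H_{[0,n]}$ has at most one pair of non-real eigenvalues, which must be the nonpositive-type pair). Therefore $\lambda_n\in D(\lambda_\infty,\delta)\subset\Comp\setminus L_R$, and a fortiori $\lambda_n\notin[t_3,t_4]$, as $[t_3,t_4]\subset L_R$.

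The step I expect to be the main obstacle is the explicit bookkeeping of the constants. The asymptotic rate \eqref{elipse} does not suffice, so one must use Theorem~2.6.2 of \cite{niso} keeping track of every dependence, and --- the delicate part --- translate its ``outside an ellipse'' form into a bound on the small circle $C$ around $\lambda_\infty$; this forces explicit control of $\delta=\dist(\lambda_\infty,L_R)$ through the conformal map of $\Comp\cup\{\infty\}\setminus[t_3,t_4]$ onto the exterior of the unit disc, and is where the geometric factors in \eqref{bigN} are produced. A second subtle point is the lower bound for $|m_\infty|$ on $C$: since $\lambda_\infty$ may lie close to $[t_3,t_4]$, the derivative $m_\infty'(\lambda_\infty)=1/r$ can be small, so the residue estimate must be sharp enough for the threshold \eqref{bigN} to remain finite and usable --- which is exactly why the quantity $\max|m_{[0,\infty)}|$ enters \eqref{bigN}.
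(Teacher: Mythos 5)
Your overall strategy --- a non-asymptotic form of the Nikishin--Sorokin estimate combined with Rouch\'e --- is the same as the paper's, but you localize Rouch\'e on a small disc $D(\lambda_\infty,\delta)$ with $\delta$ comparable to $\dist(\lambda_\infty,L_{R(\eps)})$, whereas the paper applies the Rouch\'e/argument-principle count directly on the exterior domain $\bar\Comp\setminus L_{R(\eps)}$, comparing $m_n$ and $m_\infty$ on $\partial L_{R(\eps)}$ itself. That choice makes everything you flag as ``the main obstacle'' disappear: formula (6.10) of \cite{niso}, applied with $\delta=g$ and $R=R(\eps)$, gives verbatim the bound $\frac{8g}{(t_4-t_3)(g-1)^2}\bigl(g/R(\eps)\bigr)^{2n}$ on $\partial L_{R(\eps)}$ --- this is where every constant in \eqref{bigN} comes from, with no conformal-map or distance-to-the-ellipse computation --- and, since $m_\infty=1/m_{[0,\infty)}$ exactly, the needed lower bound is just the identity $\min_{\partial L_{R(\eps)}}|m_\infty|=1/\max_{\partial L_{R(\eps)}}|m_{[0,\infty)}|$. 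The function $m_\infty$ has exactly two zeros outside $L_{R(\eps)}$, namely $\lambda_\infty$ and $\bar\lambda_\infty$, hence so does $m_n$, and $\lambda_n$ is among them.

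Measured against the statement actually being proved, your route has a genuine gap: Theorem \ref{N} asserts the specific threshold \eqref{bigN}, and the disc-based argument cannot reproduce it. Your threshold unavoidably depends on $\delta$, which you must shrink so that $\overline{D}(\lambda_\infty,\delta)\subset\Comp^+$, introducing a dependence on $\Im\lambda_\infty$ (and on the residue of $m_{[0,\infty)}$ at $\lambda_\infty$) that is absent from \eqref{bigN}; ``a threshold of the shape \eqref{bigN}'' is a different theorem. In addition, your lower bound $\inf_C|m_\infty|\ge c\,\delta/\max|m_{[0,\infty)}|$ is not justified as stated: the $O(z-\lambda_\infty)$ term in the local expansion is uncontrolled, so no absolute constant $c$ can appear there --- and the residue detour is in any case unnecessary, since $|m_\infty|=1/|m_{[0,\infty)}|$ pointwise already gives $\inf_C|m_\infty|=1/\sup_C|m_{[0,\infty)}|$ exactly. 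Working on the exterior domain also spares you the final identification of the zero in the disc with $\lambda_n$, at the mild cost of accounting for the pole of $m_n$ and $m_\infty$ at $\infty$ in the zero count.
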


\begin{proof}
Consider the functions
$$
m_n(z):=\frac1{m_{[0,n]}(z)},\quad m_\infty(z)=\frac1{m_{[0,\infty)}(z)}.
$$
We now recall the estimate given in \cite{niso} as formula (6.10): for every $n\in\mathbb{Z}_+$ and for every $\delta\in (1,R)$ one has 
\begin{equation}\label{deltaR}
\max_{z\in\partial L_R} | m_{[1,\infty)}(z)-m_{[1,n]}(z)(z)|\leq \frac {8\delta}{(t_4-t_3)(\delta-1)^2}\left(\frac\delta R\right)^{2n}.
\end{equation}

Applying equation \eqref{deltaR} with $\delta=g$, $R=R(\eps)=g+\eps\sqrt{g^2-1}$ we get after elementary transformations of \eqref{bigN} that

\begin{eqnarray*}
 |m_n(z) - m_\infty(z)|& =&  b_0^2 |m_{[1,n]}(z) - m_{[1,\infty)}(z)|\\
 &\leq& \frac{8b_0^2g} {({t_4}-{t_3})(g-1)^2  } \left( \frac{g}{1+\eps\sqrt{g^2-1}}\right)^{2n}\\
 &\leq& \frac{1}{\max_{z\in\delta L_R} |m_{[0,\infty)}(z)|}
  = \min _{z\in\delta L_R} |m_{\infty}(z)|.
 \end{eqnarray*}
Hence, by the  the  Rouche theorem, $m_n$ and $m_\infty$ have the same number of zeros in $\bar\Comp\setminus L_R$.
However, $m_\infty$ has precisely two zeros in $\bar\Comp\setminus L_R$, namely $\lambda_\infty$ and $\bar\lambda_\infty$. As $\lambda_n$ is the only a zero of $m_n$ in the upper half-plane, we get $\lambda_n\in \Comp\setminus L_R$.

\end{proof}

If we now apply Theorem \ref{N} to Example \ref{jumping}, we get $N_{0.5}=52$ for $b_0=0.5$, $N_{0.5}=2155$ for $b_0=0.1$, 
$N_{0.5}=10917$ for $b_0=0.05$, and $N_{0.5}\simeq 4\cdot 10^5$ for $b_0=0.01$. Comparing with Example \ref{jumping}, where $n_0\simeq 1,46,190$, and $n_0 \gg 1000$ respectively, we see that the estimate $N_{0.5}$ is a far from tight upper bound whose ratio to the numeric value $n_0$ appears to grow with decreasing $b_0$.

\newpage
\begin{table}
\begin{tabular}{|l|c|c|c|c|c|c|c|c|c|}
\hline
  &$a_0$               & $b_0$& $\max n$   & $\lambda_\infty$      & $n_0$   & q         & $|\lambda_\infty-\lambda_n|$\\ \hline\hline
 Legendre & 0.5   & 0.5      & 50 & $0.4045 + 0.3064\ii$  &  1     & 1.3855 & $\simeq q^{-2n} $  \\ 
                & 0.5    & 0.1       & 200& $0.4946 + 0.0155\ii$ &   46   & 1.0180 & $\simeq3.5^{-1} q^{-2n}$    \\
                &0.5     & 0.05     &1000& $0.4986 + 0.0039\ii$ & 190    & 1.0045  & $\simeq4.6^{-1} q^{-2n}$     \\
                & 0.5    & 0.01     &1000& $0.4999 + 0.0002\ii$ & $>1000$ & 1.0002  &$ \simeq6.5^{-1} q^{-2n}$  \\
                & 1.001& 0.001   & 200 &1.0010                   &     ---           & 1.0456  &$\simeq12.5^{-1} q^{-2n}$  \\ \hline
 Wigner   & 0.5    &  0.1       &1000&$0.4975 + 0.0096\ii$&     175     &1.0050& $\simeq4^{-1}q^{-2n}$        \\ \hline
 Cantor & 0.5     & 0.1        &  400&  $0.5074 + 0.0096\ii$& 81&   1.0044 &   $\ll q^{-2n} $  \\ \hline 
 Jacobi(1,2)&-5/3 &$\sqrt{2/3}$&300& -1                           &   ---           &1    & $\simeq n^{-2}$   \\ \hline
  \end{tabular}
\caption{Parameters and numerical results for Examples \ref{jumping}, \ref{jumping2}, \ref{jumping3} and \ref{realev}. If $\lambda_\infty\notin\Real$ then $n_0$ denotes the maximal $n$ for which $\lambda_n\in\Real$, $q$ was computed according to Theorem \ref{main-geom}.} \label{Table}
\end{table}

\begin{figure}[htb]
\includegraphics[width=300pt]{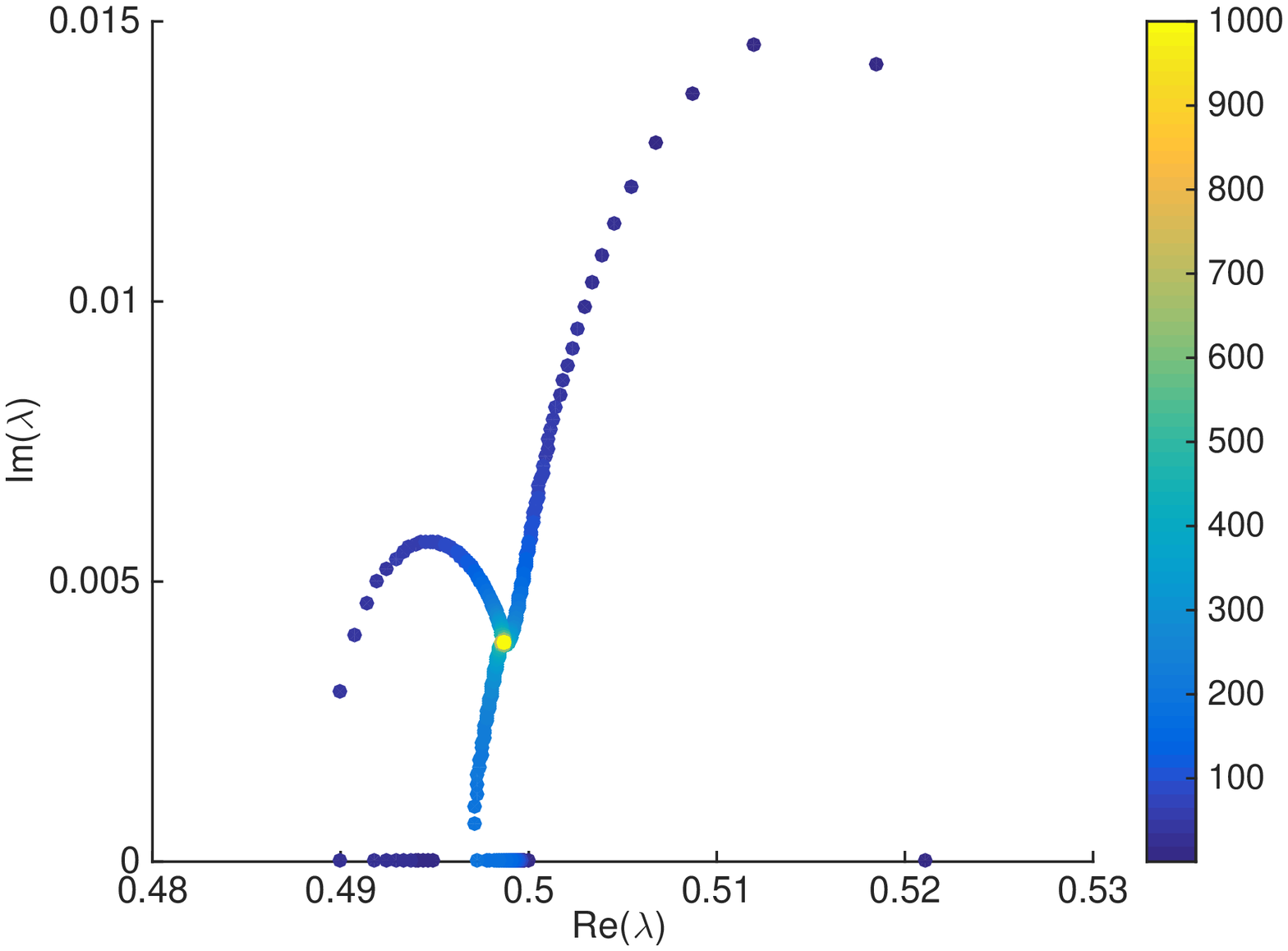} 
\includegraphics[width=300pt]{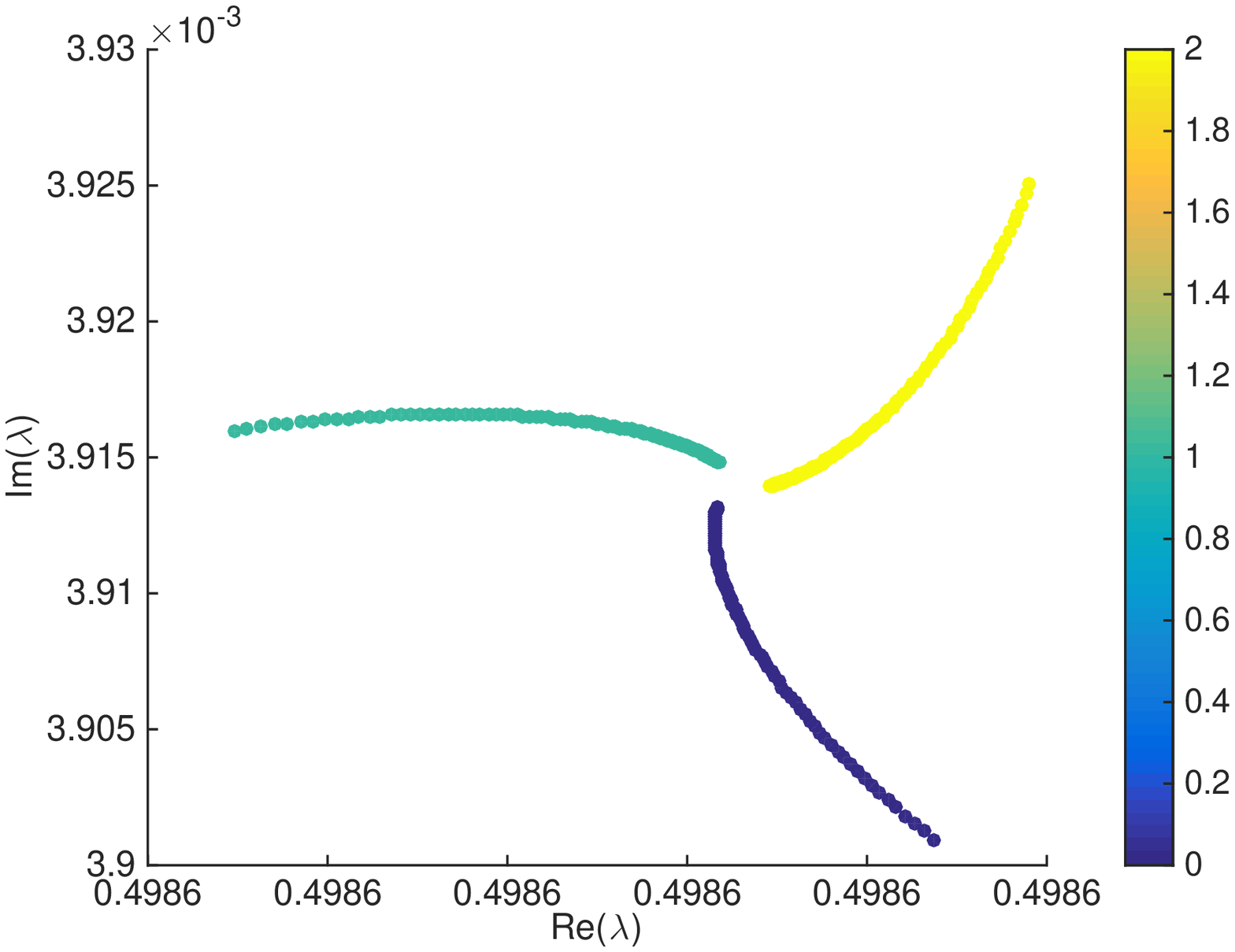}
\caption{The only eigenvalue  $\lambda_n$ of nonpositive type of the matrix $H_{[0,n]}$ given in eq. \eqref{H0n} with $a_0=0.5$, $b_0=0.05$ and $H_{[1,n]}$ being the Jacobi matrix corresponding to the Legendre polynomials.  The points, each corresponding to a different $n$, are plotted on the complex plane color coded according to $n$ in the upper picture, and according to $n \mod 3$  ($0$--blue, $1$--cyan, $2$--yellow) in the lower picture, where we show a zoomed detail around  $\lambda_\infty$. The three branches of $\lambda_n$ mentioned in the text are clearly visible. See Example \ref{jumping}.}\label{Legendre_compl}
\end{figure}

\begin{figure}[htb]
\includegraphics[width=300pt]{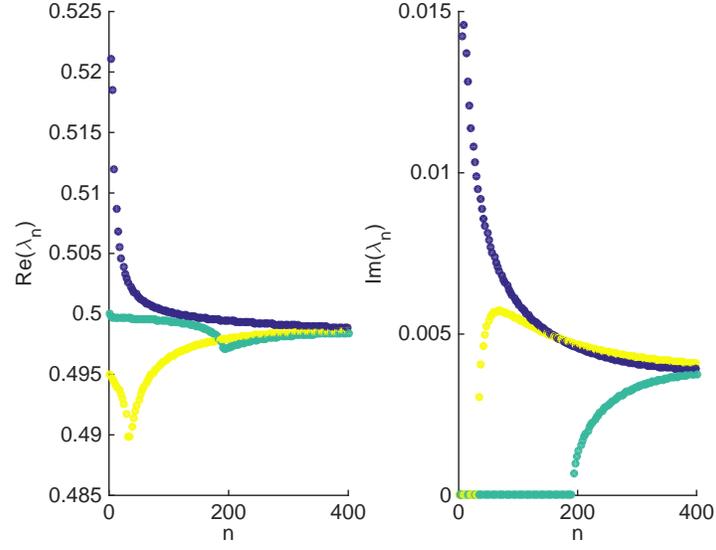}
 \caption{Real and imaginary part of the eigenvalue $\lambda_n$  from Figure \ref{Legendre_compl} Vs. $n$. The points  color coded according to $n \mod 3$. The cusps in the real parts of the yellow and cyan branches correspond to the change of direction when a branch leaves the real line, see Figure \ref{Legendre_compl}. See Example \ref{jumping}.}\label{Legendre_ReIM}
\end{figure}

\begin{figure}[htb]
\includegraphics[width=300pt]{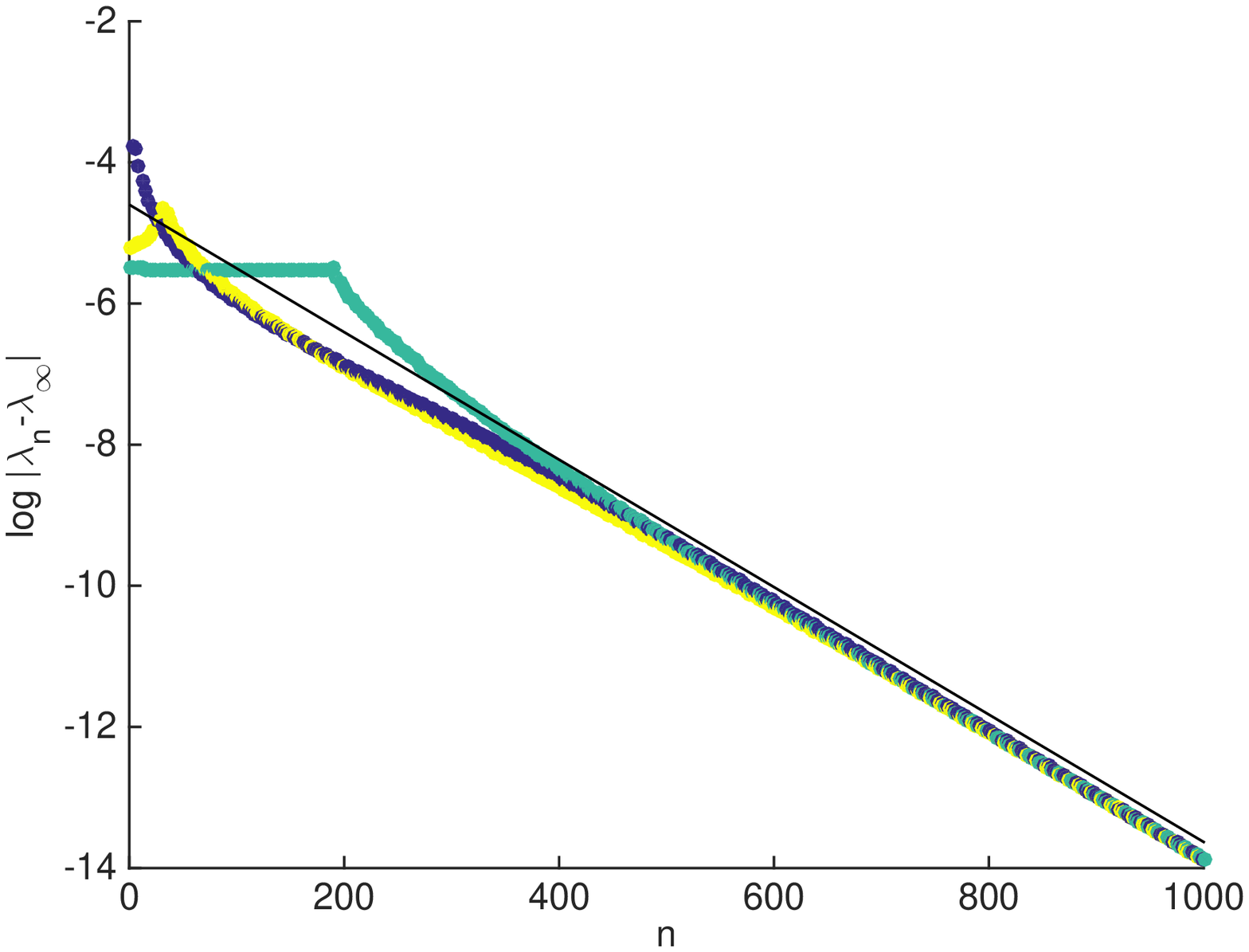}
\caption{Same parameters as in Figure \ref{Legendre_compl}. The black line represents the theoretical bound given in Theorem \ref{main-geom}, shifted down so as to be superimposed on the numerical data. The logarithmic scale on the vertical axis makes evident the exponential convergence of $\lambda_n$  to $\lambda_\infty$. See Example \ref{jumping}.}\label{Legendre_conv}
\end{figure}

\begin{figure}[htb]
\includegraphics[width=300pt]{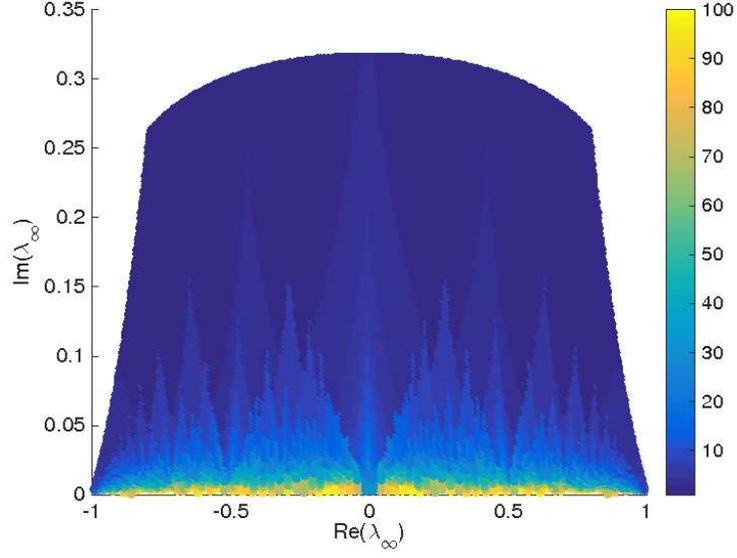}
\caption{The value $n_0$ where the last $\lambda_n$ branch leaves the real axis as a function of $\lambda_\infty$. $H_{[1,n]}$ is the Jacobi matrix corresponding to the Legendre polynomials and $a_0$ and $b_0$ vary to get the different values of $\lambda_\infty$. See Example \ref{jumping}.}\label{Ns}
\end{figure}

 \begin{figure}[htb]
\includegraphics[width=300pt]{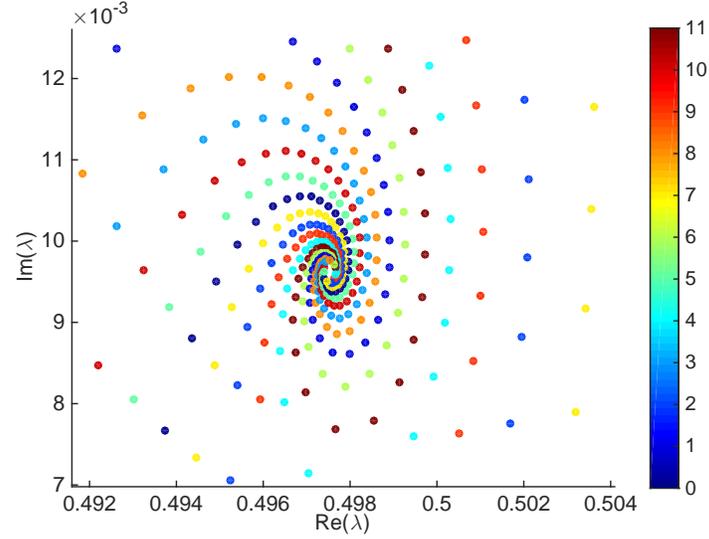}
 \caption{The eigenvalues $\lambda_n$  for $H_{[1,n]}$ associated with the Wigner measure and parameters $a_0=0.5$, $b_0=0.1$, plotted in the complex plane, color coded according to $n \mod 12$ where twelve is the number of $\lambda_n$ branches. See Example \ref{jumping2}.
 }\label{Wigner_compl}
\end{figure}

\begin{figure}[htb]
\includegraphics[width=300pt]{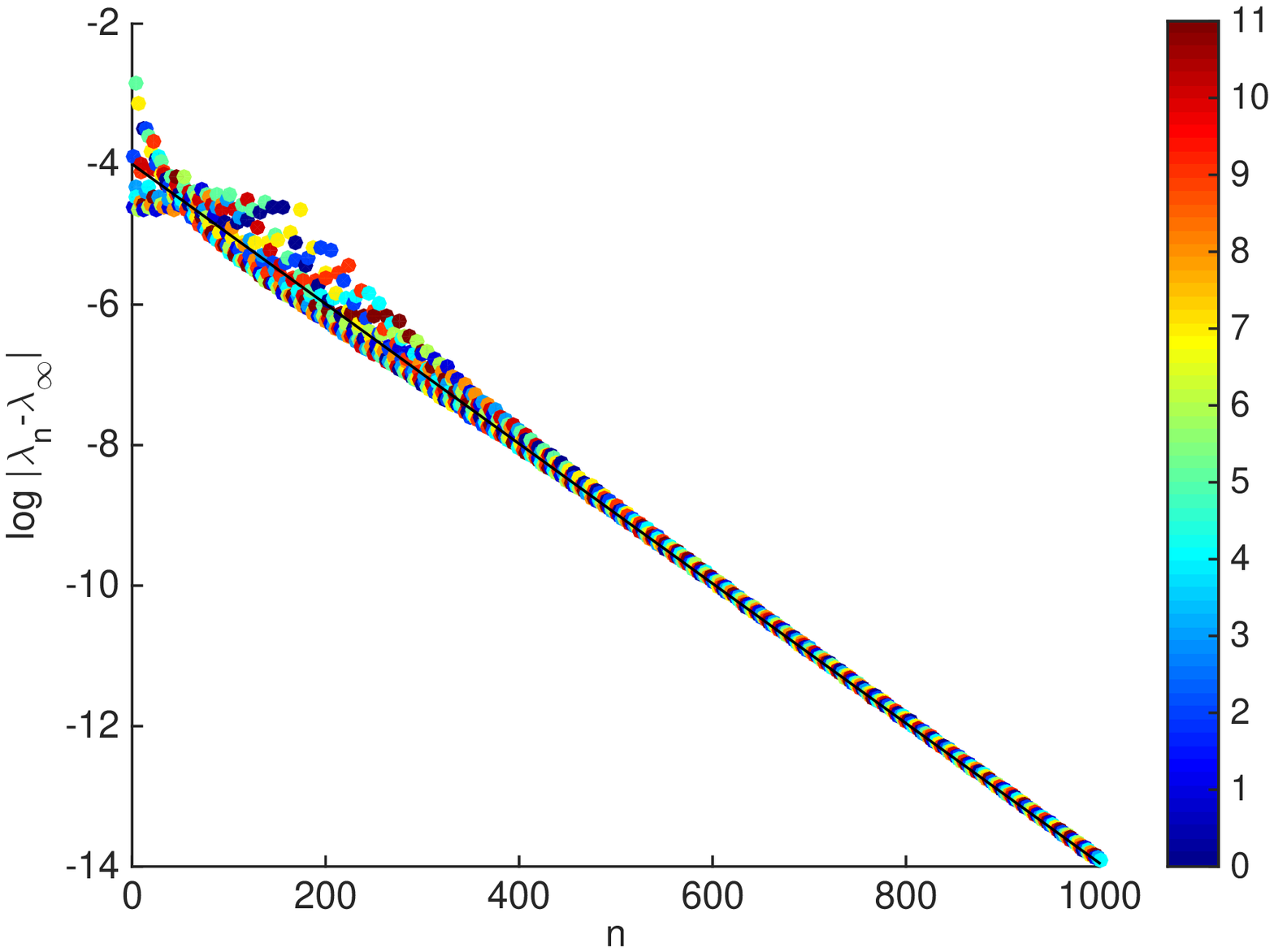} 
\caption{Same parameters and color code as in Figure \ref{Wigner_compl}. The black line represents the theoretical bound given in Theorem \ref{main-geom}, shifted down so as to be superimposed on the numerical data. The logarithmic scale on the vertical axis makes evident the exponential convergence of $\lambda_n$  to $\lambda_\infty$. See Example \ref{jumping2}.}\label{Wigner_conv}
\end{figure}

\begin{figure}[htb]
\includegraphics[width=300pt]{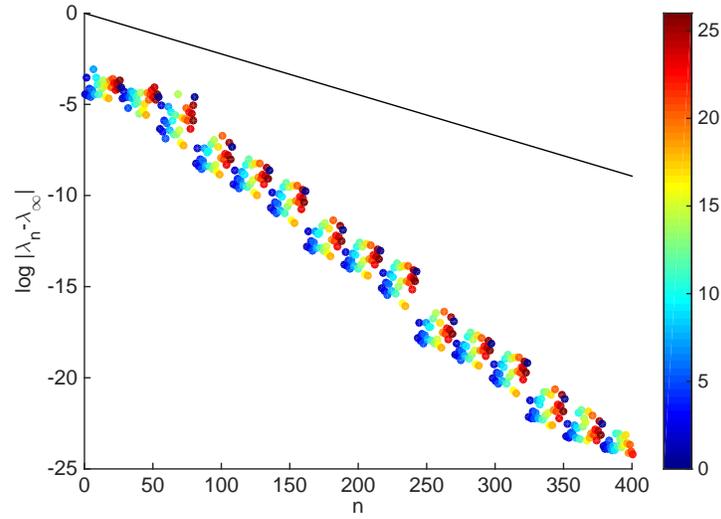}
\caption{Exponential convergence of $\lambda_n$ to $\lambda_\infty$. Here $H_{[1,\infty)}$ is associated with a measure on a Cantor-like set and $a_0=0.5$, $b_0=0.1$.  The black line represents the theoretical bound given in Theorem \ref{main-geom}. Color coding according to $n \mod 27$ evidences a repeated pattern of 27 points.  The observed convergence is again exponential but much faster than the bound given by Theorem \ref{main-geom}: the observed $q$ is approximately $1.1147$, much larger than the theoretical one $1.0044$. See Example \ref{jumping3}.}\label{Cantor_conv}
\end{figure}

\begin{figure}[htb]
\includegraphics[width=300pt]{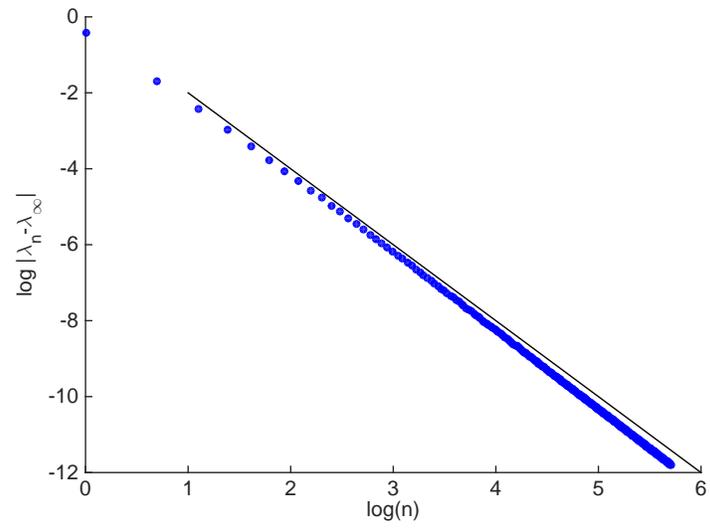}
 \caption{Log-log plot showing $\mathcal{O}(n^{-2})$ convergence of $\lambda_n$ to $\lambda_\infty=-1$ in a case when $\lambda_\infty$ is embedded in the spectrum of $H_{[1,\infty)}$. See Example \ref{realev}.}\label{Jacobicrit_conv}
\end{figure}

\end{document}